\documentclass[conference]{IEEEtran}
\IEEEoverridecommandlockouts
\usepackage{cite}
\usepackage{amsmath,amssymb,amsfonts}
\usepackage{algorithmic}
\usepackage{multirow}
\usepackage{diagbox}
\usepackage{graphicx}
\usepackage{textcomp}
\usepackage{xcolor}
\usepackage{amsthm}
\usepackage{tabularx}
\usepackage{subfigure}
\usepackage{stfloats}
\usepackage{hyperref}
\usepackage[utf8]{inputenc}
\usepackage{babel,blindtext}
\usepackage{titlesec}
\usepackage{cuted}
\usepackage{siunitx,tabularx}

\newtheorem{definition}{Definition}
\newtheorem{proposition}{Proposition}
\newtheorem{corollary}{Corollary}
\newtheorem{property}{Property}
\newtheorem{theorem}{Theorem}

\usepackage{amssymb}
\usepackage{amsthm}
\usepackage{amsmath}
\usepackage{verbatim}

\def\BibTeX{{\rm B\kern-.05em{\sc i\kern-.025em b}\kern-.08em
    T\kern-.1667em\lower.7ex\hbox{E}\kern-.125emX}}
\begin{document}

\title{Performances of Symmetric Loss for \\ Private Data from Exponential Mechanism\\
}

\author{\IEEEauthorblockN{\textbf{Jing Bi, Vorapong Suppakitpaisarn} }
\IEEEauthorblockA{Graduate School of Information Science and Technology}
\text{The University of Tokyo}\\
}

\maketitle
\thispagestyle{plain}
\pagestyle{plain}

\begin{abstract}
This study explores the robustness of learning by symmetric loss on private data. Specifically, we leverage exponential mechanism (EM) on private labels. First, we theoretically re-discussed properties of EM when it is used for private learning with symmetric loss. Then, we propose numerical guidance of privacy budgets corresponding to different data scales and utility guarantees. Further, we conducted experiments on the CIFAR-10 dataset to present the traits of symmetric loss. Since EM is a more generic differential privacy (DP) technique, it being robust has the potential for it to be generalized, and to make other DP techniques more robust.
\end{abstract}

\begin{IEEEkeywords}
Machine learning, Symmetric loss, Differential privacy, Label differential privacy, Exponential mechanism
\end{IEEEkeywords}

\section{Introduction}
Individual information privacy has become a severe issue with the development of data analysis techniques. Risk of exposing sensitive data would discourage respondents to cooperate honestly with our caring investigations. Even worse, it can also discourage the participation of respondents. Therefore, many researchers propose to corrupt a part of data before publication.
Data corruption and publication techniques that reveal the maximum possible amount of knowledge, with minimum possible risk of exposure of sensitive data, are imperative for data analysis. 

There are several notions proposed for private data publications such as $k$-anonymity~\cite{Sweene02} or $\ell$-diversity~\cite{machanavajjhala2007diversity}. One of the most commonly-known notions is DP \cite{Dwork06}. Many researchers are interested in DP because it can precisely quantify information leakage, usually called privacy budget, in each publication. 

Several researches in DP aim to find the most precise publication under a given privacy budget. Many, such as \cite{PETS22}, aim to publish a precise machine learning (ML) output under DP. However, none of them consider relationship between the precision and loss function, which is one of the most important components in several ML approaches.   


\subsection{Our Contributions}

In this study, we consider a type of loss function called symmetric loss. Symmetric loss, and, in particular, barrier hinge loss, are proven to be robust against corrupted data \cite{charoenphakdee2019symmetric}. As we corrupt data for privacy in DP, we believe that the loss is suitable for DP. Indeed, the corrupted model considered in \cite{charoenphakdee2019symmetric} is identical to the corruption in randomized response (RR) \cite{warner1965randomized, Yue2016}––which is one of the publication techniques under DP. We can immediately apply the results in \cite{charoenphakdee2019symmetric} to RR.

We question if symmetric losses can also be robust for other publication techniques. In this work, we consider publications under EM \cite{mcsherry2007mechanism, b1}. While RR requires that all data records must be independently sampled, we do not need that assumption in EM. Exponential mechanism is then more applicable than RR, and thus, is worth considering \footnote{Unfortunately, the robustness proof in \cite{charoenphakdee2019symmetric} assumes that the records are independently sampled. We are considering a robustness proof that does not require the independence.}. Our contributions can be summarized as follows: 
\begin{enumerate}
\item \textbf{Theoretical Results:} We presented and proved a series of properties between data scale and the expected utility range while implementing EM. Further, we provide numerical guidance of privacy budgets for EM. In particular, we provide theoretical evidence that an accurate learning output can be obtained with almost certain probability using EM and symmetric loss.
\item \textbf{Experimental Results:} We investigated the robustness of barrier hinge loss on an EM processed CIFAR-10 dataset. The experiment results have been presented. Even with a small privacy budget $\epsilon =0.5$, the output accuracy could reach a level of $80\%$. The accuracy is larger for larger $\epsilon$. In this sense, we are allowed to use highly protected data. 
\item We corrected the mistake in the formula for implementing EM in \cite{b1}.
\end{enumerate}

\subsection{Related Works}
Current DP research includes two major directions. The first direction is concerned with how to quantify privacy distances. The original definition of DP is sometimes too trivial for different privacy demands. Therefore, providing a wide range of definitions, such as relaxed $(\epsilon,\delta) $-DP \cite{Dwork2014}, R\'{e}nyi-DP \cite{Mironov2017}, and so forth––provides higher dimensions for researchers to demonstrate inspirations for new mechanisms. 

The second direction is to provide new mechanisms, several of which are for ML applications. As ML and DP are both data based, applying ML approaches in DP mechanisms has become popular. Specifically, as label-DP regime \cite{PETS22} naturally fits the ML methods, rolling-up ML approaches for better utility DP frameworks is simple, such as in \cite{Abadi2016, PETS22}. In \cite{PETS22}, many aggregated ML-DP frameworks are presented.

Our discussion is different from the above two directions as follows. Firstly, this work is still based on the traditional DP definition. Therefore, this research can easily be extended to other privacy quantifications. Further, we do not focus on aggregating multiple methods to make new DP frameworks. We only present performances of loss functions on EM processed data. Unlike the examples in \cite{PETS22} that purely pursue better performance with unsupervised or ensemble learning pre-processing, our experiments still demonstrate the insufficiency of only depending on loss functions––especially while the privacy budget is small. However, as loss functions are fundamental elements in ML, testing them can provide a reference of modifications for a series of ML-DP hybrid frameworks.
\subsection{Paper Organization}

The rest of the paper is organized as follows. In section II, we present an introduction to symmetric loss with its properties, and EM with its implementation. In section III, we discuss some theoretical properties of EM. In section IV, we show the experiment results on the CIFAR-10 dataset, which shows the robustness of barrier hinge loss on EM. 

According to PDAA regular paper format, proofs of properties in section III are moved to \ref{app1}ppendix, which is not included in the publication. One can find the full version on the ArXiv.

\section{Preliminaries}

\subsection{Symmetric Loss in Machine Learning}
Any given data can be represented by a series of real values, a $(d+k)$-tuple $(\mathbf{x} \in \mathbb{R}^d,\mathbf{y} \in \mathbb{R}^k)$. In ML, $\mathbf{x}$ are called features, and $\mathbf{y}$ are called labels. Our dataset $D$ is a collection of those tuples; i.e., $D := \langle \mathbf{x}_i, \mathbf{y}_i\rangle_{i = 1}^n$ is a dataset with size $n$. It is not necessary to have all meaningful values for all entries in a tuple in a dataset $D$. For example, we may have $0$'s or wrong values for some entries. Normally, features are treated as fully and correctly known elements; otherwise, we can move the unclear entry or entries to the label side. 

Machine learning is meant to find the internal relationship between features and labels in a set, and make predictions for prospective data. Traditionally, if all $\mathbf{y}$'s are correctly valued, the process is called supervised learning; if they are partially unclear, the process is called semi-supervised learning; if all labels are unknown, the process is called unsupervised learning. Formally, a function $f:\mathbb{R}^d \rightarrow \mathbb{R}^k$, which can describe this relation as best as possible for the whole dataset is the learning goal (i.e., $|f(\mathbf{x})-\mathbf{y}|=0$ for all $(\mathbf{x},\mathbf{y})$'s is desired). 

In this work, we consider binary classification problems, which is the task when $\mathbf{y} \in \{1,-1\}$. We call $(\mathbf{x},+1)$ and $(\mathbf{x},-1)$ as positive and negative data, respectively. We also define $D_P = \{(\mathbf{x},\mathbf{y}) \in D: \mathbf{y} = 1\}$ and $D_N = \{(\mathbf{x},\mathbf{y}) \in D: \mathbf{y} = -1\}$.

The function $f$ is controlled by parameters and hyper-parameters. The learning process is to optimize these parameters through given objectives. An intuitive objective, for example, can be Bayes risk $R(f)=\mathbb{E}[l(f(\mathbf{x}),\mathbf{y})]$, where $l$ satisfying the definition of the distance function is called the loss function. Since it is impossible to know and calculate all possible or potential data in the same problem, we normally restrict the expectations to certain sizes, which are named empirical risks, in the form of $\frac{1}{|D|}\sum_{(\mathbf{x},\mathbf{y}) \in D} l(f(\mathbf{x}),\mathbf{y})$. 

Trivial empirical risk unbiasedly weighs the losses from every data point, which outputs functions that suffer from disparate data treats. Manipulating it is necessary and, and different predictors can clearly be derived. In binary classification problems, to fairly evaluate the errors from positive and negative labeled data, one would leverage the receiver operating characteristic curve (AUC) \cite{Narasimhan2016}, or the balanced error rate (BER) \cite{Feldman2014, Zhao2020}. AUC and BER are defined in the following definitions.

\begin{definition}
Empirical AUC-risk is defined as: 
$$R_{AUC}=\mathbb{E}_{D_P}[\mathbb{E}_{D_N}[l(f(\mathbf{x}_P)-f(\mathbf{x}_N)]]$$
$$=\frac{\sum_{(\mathbf{x}_P,1) \in D_P}\sum_{(\mathbf{x}_N,-1) \in D_N} l(f(\mathbf{x}_P)-f(\mathbf{x}_N))}{|D_P||D_N|}$$
\end{definition}
\begin{definition}
Empirical BER-risk is defined as: 
$$R_{BER}=\frac{1}{2}\left[\mathbb{E}_{D_P}[l(\mathbf{y}f(\mathbf{x}))]+ \mathbb{E}_{D_N}[l(\mathbf{y}f(\mathbf{x}))]\right]$$
$$=\frac{1}{2}\left[\sum_{(\mathbf{x},1) \in D_P}l(f(\mathbf{x})) + \sum_{(\mathbf{x},-1) \in D_N}l(-f(\mathbf{x})) \right]$$
\end{definition}
The restrictions to certain datasets still carry in distances between empirical minimizers and Bayes minimizers. Let $f^*$ be the optimal function that describes the relation of the problem, and $\mathcal{F}$ is a collection of functions restricted to $D$. Such distances are described in a Bayes sense $R(f^*)- \inf_{f\in \mathcal{F}}R(f)$, named excess risks \cite{Bartlett2006}. Classical excess risks are defined on a 0-1 loss function, which is the function $l_{01}$ such that $l_{01}(x) = 1$ for $y \leq 0$, and $l_{01}(x) = 0$ otherwise. However, because of the non-differentiability and non-convexity of a 0-1 loss at the crucial point $0$, which harms the learning fluency and cost, we normally adopt other (surrogate) losses instead––like a logic sigmoid function, or hinge functions, and so forth. 

The unavoidable issue is how to guarantee the minimizer of surrogate excess risk to coincide with the minimizer of 0-1 excess risk. This fundamental property is called \textit{classification-calibration}, and has been proven to be satisfied for most commonly used surrogate losses (e.g., exponential, quadratic, hinge, and sigmoid \cite{Bartlett2006}). \cite{charoenphakdee2019symmetric} has extended this property to also be satisfied by symmetric functions, which are function $f$'s such that there is $C$ where $f(x) + f(-x) = C$ for all $x$. (Therefore, the function is symmetric about the intersection point with $y$-axis, not symmetric about $y$-axis.) Despite this merit, normal symmetric functions cannot simultaneously satisfy non-negative values, and convexity \cite{Plessis2014, Ghosh2014} (except for the horizontal lines). To cope with the issue, \cite{charoenphakdee2019symmetric} proposes the use of a restricted symmetric loss, called a barrier hinge function, which can be defined as follows:
\begin{definition}[\cite{charoenphakdee2019symmetric}] A barrier hinge function with parameter $b,r$ can be defined as follows:
$$f_{b,r}(x)=\max(-b(r+x)+r,\max(b(x-r),r-x)).$$
\end{definition}
The barrier hinge function is symmetric on $x\in [-r,r]$, when $b > 1$ and $r > 0$. 
Therefore, it satisfies \textit{classification-calibration} for the sensitive interval $[-r,r]$, and satisfies the non-negativity and the convexity conditions at the same time.

In our experiments, we leverage both AUC and BER risks as complements to each other to evaluate the overall performances of several surrogate losses. Further, we also show the superiority of the barrier hinge loss over other losses on mild privacy cost.

\subsection{Exponential Mechanism}
Differential privacy \cite{Dwork06} aims to protect sensitive information of each individual with minimum harm to the utilities (learning results). The strategy of adding noise is to obfuscate every possible dataset $D$ with its most likely set of datasets $\{D'|D'\sim D\}$ with a distribution corresponding to the similarity (e.g., Hamming distance as in the following). Formally, we say a method or a randomized algorithm $\mathcal{A}$ is $\epsilon$-differential privacy preservative \cite{Dwork06} if, for all possible measurements $\mathcal{M}$, $$\frac{\Pr\left[\mathcal{A}(D)\in \mathcal{M}\right]} { \Pr\left[\mathcal{A}(D')\in \mathcal{M}\right]} \in [\exp(-\epsilon), \exp(\epsilon)],$$ for any given privacy budget $\epsilon$.

Label DP \cite{PETS22} is a relaxation version of DP, which constrains the private values to only lie on a subset of columns in an information table. Since the result of this relaxation fits the ML regime naturally, without any extra work, ML models could be interspersed. \cite{PETS22} demonstrated their frames of leveraging unsupervised and semi-supervised learning models in improving utilities of private data. 

One of most classical and wildly used DP methods is RR, introduced in \cite{warner1965randomized}. By simply randomly flipping the data with a fixed probability, we are able to protect individual privacy, and infer the classification ratio of the sensitive information. However, this classical method requires the plausibly deniable data to be mutually independent. Otherwise, flipping a part of the dependent data would not be sufficient. Unfortunately, several existing datasets are not entirely independent––thus, RR would not be efficient because of this. Instead, researches prefer to adopt EM, an upgraded version from Laplace mechanism \cite{Dwork2014}. 
\begin{definition}[Exponential Mechanism (EM) \cite{mcsherry2007mechanism}] Given any real valued function $q:\mathcal{D} \rightarrow \mathbb{R}$, with $\Delta q := \max_{D\sim D'}|q(D)-q(D')|$. The EM can pose any possible dataset $D'' \in \mathcal{D}$ with the probability $P(D'') \propto \exp(\frac{q(D'',D^*) \epsilon}{2\Delta q})$.
\end{definition}
\begin{proposition}
Exponential mechanism is $\epsilon$-differentially private.
\end{proposition}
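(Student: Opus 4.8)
The plan is to reduce the statement to a pointwise bound on the probability that $\mathcal{A}$ returns a given output and then to integrate that bound over an arbitrary event. Assume $\Delta q>0$ is finite (otherwise the mechanism is degenerate). Fix neighbouring datasets $D_1\sim D_2$ and an arbitrary candidate output $r$, and write the normalising constant as $Z(D):=\sum_{r'}\exp\!\left(\frac{\epsilon\, q(D,r')}{2\Delta q}\right)$, so that $\Pr[\mathcal{A}(D)=r]=\exp\!\left(\frac{\epsilon\, q(D,r)}{2\Delta q}\right)/Z(D)$. Then the likelihood ratio factors cleanly,
$$\frac{\Pr[\mathcal{A}(D_1)=r]}{\Pr[\mathcal{A}(D_2)=r]} = \exp\!\left(\frac{\epsilon\,\bigl(q(D_1,r)-q(D_2,r)\bigr)}{2\Delta q}\right)\cdot\frac{Z(D_2)}{Z(D_1)},$$
and it suffices to bound each of the two factors by $\exp(\epsilon/2)$.

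First I would bound the exponential factor: by the definition of the sensitivity, $|q(D_1,r)-q(D_2,r)|\le \Delta q$, so this factor is at most $\exp(\epsilon/2)$. Next I would bound the ratio of normalisers term by term: again by sensitivity, $\exp\!\left(\frac{\epsilon\, q(D_2,r')}{2\Delta q}\right)\le \exp(\epsilon/2)\exp\!\left(\frac{\epsilon\, q(D_1,r')}{2\Delta q}\right)$ for every $r'$, and summing over $r'$ gives $Z(D_2)\le \exp(\epsilon/2)\,Z(D_1)$, hence $Z(D_2)/Z(D_1)\le \exp(\epsilon/2)$. Multiplying the two estimates yields $\Pr[\mathcal{A}(D_1)=r]/\Pr[\mathcal{A}(D_2)=r]\le \exp(\epsilon)$. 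Exchanging the roles of $D_1$ and $D_2$ gives the matching lower bound $\exp(-\epsilon)$, and since the pointwise inequality holds for every $r$, summing (or, on a continuous output space, integrating) it over any measurable set $\mathcal{M}$ of outputs gives $\Pr[\mathcal{A}(D_1)\in\mathcal{M}]/\Pr[\mathcal{A}(D_2)\in\mathcal{M}]\in[\exp(-\epsilon),\exp(\epsilon)]$, which is exactly the definition of $\epsilon$-DP.

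The arithmetic is routine, so the only genuinely delicate point — and the reason the factor $2\Delta q$, rather than $\Delta q$, appears in the definition — is that moving from $D_1$ to $D_2$ perturbs not only the numerator $\exp\!\left(\frac{\epsilon\, q(D,r)}{2\Delta q}\right)$ but also the normaliser $Z(D)$, and each of these contributes its own factor of $\exp(\epsilon/2)$; splitting the privacy budget evenly between numerator and normaliser is precisely what makes the product collapse to $\exp(\epsilon)$. Indeed, this is exactly the place where the erroneous implementation formula we correct elsewhere in the paper goes wrong. A secondary, purely technical issue is the passage from the discrete description in the definition to general output spaces and to events rather than singletons, which is handled by replacing sums with integrals and invoking monotonicity of the integral; I would remark on it but not dwell on it.
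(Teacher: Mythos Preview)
Your argument is correct and is precisely the standard McSherry--Talwar proof: split the likelihood ratio into the numerator-shift factor and the normaliser ratio, bound each by $\exp(\epsilon/2)$ via the sensitivity, and then pass from singletons to arbitrary events. There is nothing to compare against, however, because the paper does not supply its own proof of this proposition: it is stated immediately after Definition~4 as a known result (with citation to \cite{mcsherry2007mechanism,b1}) and no proof appears in either the main text or the appendix. Your write-up would in fact be a welcome addition; the only cosmetic point is that the paper's Definition~4 writes the quality function with the output as the \emph{first} argument, $q(D'',D^*)$, whereas you write $q(D,r)$ with the input first, so you may wish to align the notation.
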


Defining the real valued function $q$ is the essential step \cite{mcsherry2007mechanism}, which provides the original power of this mechanism. The global sensitivity $\Delta q$ of the maximum distance of adjacent datasets is used for binding the range of the distribution.

One common approach in defining $q$ is based on Hamming distance \cite{Jordan2019}: Let $\delta(D = \langle \mathbf{x}_i, y_i \rangle_{i=1}^n,D' = \langle \mathbf{x}_i, y'_i \rangle_{i=1}^n):= |\{i| y_i \not= y'_i\}|$. We define the Hamming distance $q$ as $q(D,D^*) :=|D_P\cap D_P^*|+|\overline{D_P\cup D_P^*}| =|D^*|- \delta(D,D^*)$~\cite{b1}.
The implementation of EM contains the calculations of the probability distribution over $\mathcal{D} := \{D' = \langle \mathbf{x}_i, y'_i \rangle_{i=1}^n : \langle y_i \rangle_{i = 1}^n \in \{-1,+1\}^n\}$. A straightforward implementation would cost $O(n^2)$ time and space. Redundant exponential function calculations carry an extra burden. Therefore, we apply the two-step approach \cite{b1}. Firstly, we compute the distribution for each possible score by logarithmic recursion. Please note that there is a misprint in \cite{b1} where they drop the term $\epsilon/2$ there.
\begin{equation*}
\log \Pr(q = i) 
= \begin{cases}
        -|D^*|\log(1+\exp(\frac{\epsilon}{2\Delta})) & {\text{when }i =0}\\
        \log(|D^*|-i+1)-\log i\\~~+\frac{\epsilon}{2}+\log \Pr(q = i-1) &  {\text{when }i > 0.}
\end{cases}
\end{equation*}
Then, we uniform-randomly select one table within the same score group, i.e., from the selected value of $q$, and select $D$ from the set $\{D' \in \mathcal{D} : q(D', D^*) = q\}$ with uniform probability. This two-step method cuts down the time and space complexity to a linear size. One noticeable point is, by this definition of $q$, EM will degrade to RR when the data size goes to infinity. Despite this odd fact, this degradation would not happen in real situations, but is still worth our study. Therefore, it is an open problem to theoretically prove the relation between EM and RR.

\section{Properties in Exponential Mechanism}

There is however one issue of the EM for binary classification using semi-supervised learning. The table, which we use for the classification, might be heavily flipped such that the learning algorithms cannot correctly learn the function. When we use symmetric loss, it is known that a learning algorithm can still provide a good result if the flipping rate is no more than half of the dataset––i.e., $q \geq |D|/2$ \cite{charoenphakdee2019symmetric}. In this section, we consider the probability that $q \geq |D|/2$ for each dataset size $n = |D|$, and each privacy budget $\epsilon$. Specifically, we consider $\sum_{i \geq |D|/2 }\Pr[q = i]$. We conduct theoretical analyses, then give numerical guidance for real implementations. 

In the following subsections, we only give out the short and straight proofs for corollaries. The important proofs of properties are demonstrated in the appendices. 

\subsection{Success probability and the dataset size}

In this subsection, we study the relationship between $\sum_{i \geq n/2 }\Pr[q = i]$, and the size of dataset $n$. We begin our illustration by proving generic properties of a binomial distribution.

\begin{definition}
A truncated binomial distribution $S$ with $n$ trials is the partial sum of the binomial distribution restricted to some certain number of events. For example, between $j$ and $k$: 
$$S(n,j,k)=\Pr\left[j\leq I_{n,p} \leq k\right]=
\sum_{i=j}^{k} \binom{n}{i}p^i(1-p)^{n-i},$$ where $p$ is the probability of the binomial distribution.
\end{definition}
\begin{definition} The upper truncated binomial distribution by 
$$S(n,j)=\Pr\left[ I_{n,p} \leq j\right]=\sum_{i=0}^{j}\binom{n}{i} p^i(1-p)^{n-i}.$$
In the lower truncated case, we may just use $1-S(n,j)$.
\end{definition}


\begin{property} Truncated binomial distribution $S(n,j)$ is monotonically decreasing w.r.t. $n$, while $j\in [0,n]$ is fixed
\end{property}

\begin{property}\label{prop2} Truncated binomial distribution $S(n,n-k)$ is monotonically increasing w.r.t. $n$, while $k \in [0,n_0]$ and $n_0$ are fixed.
\end{property}
These two properties seem to be trivial, however, the proofs are not intuitive, see \ref{app1}ppendix. Moreover, above properties are simple enough to not involve probability $p$. With the analysis getting deeper, we are able to deduce the following property considering $p$ with the help of above properties.

\begin{figure}[htbp]
    \centering
    \includegraphics[scale=0.56]{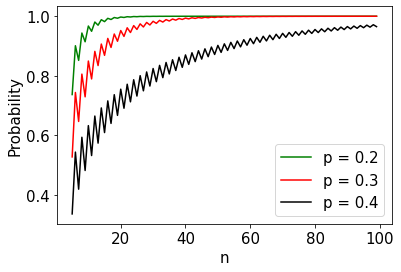}
    \caption{Trends of truncated binomial distributions $S(n,\lceil\frac{n}{2} \rceil)$ for $n \in [5,100]$ w.r.t. $p=0.2$, $p=0.3$ and $p=0.4$. All monotonicities except non-absolute ones have been shown.}
    \label{fig:my_label}
\end{figure}

\begin{property}
\label{prop3}
Truncated binomial distribution $S(n,\lceil\frac{n}{2} \rceil+k)$ is almost monotonic w.r.t. $n$, except for when $n$ changes between even and odd––this is while $k \in [-\lceil\frac{n_0}{2} \rceil, \lfloor \frac{n_0}{2} \rfloor]$, and $n_0$ is fixed. Moreover, if $p \leq \frac{n-\lceil\frac{n}{2} \rceil-k}{n+1} $, $S$ is almost monotonically increasing; and is decreasing if $p \geq \frac{n-\lceil\frac{n}{2} \rceil-k}{n+1} $, respectively. 
\end{property} 
The relationship between $n$ and $S(n,j)$ is illustrated as discussed in Property \ref{prop3} in Fig. \ref{fig:my_label}. This observation actually counters the normal intuition, we demonstrate it out for appliers not falling into intuition traps. We then use the property in the following corollary.

\begin{corollary}
The probability that the score $i$ taken in EM, $\sum_{i \geq |D|/2} \Pr[q = i]$, is no less than half of the maximum possible is almost monotonically increasing w.r.t. the data size $n$.
\end{corollary}
\begin{proof}
Firstly, such a probability is given by: 
\begin{eqnarray*}
&&\sum_{i \geq |D|/2} \Pr[q = i]\\ 
&=&  \sum_{i=\lceil \frac{n}{2}\rceil}^{n}\frac{{n\choose i} \exp(\frac{i\epsilon}{2\Delta})}{(1+\exp(\frac{\epsilon}{2\Delta}))^n}\\
&=&\sum_{i=\lceil \frac{n}{2}\rceil}^{n} {n\choose i} \left(\frac{1}{1+\exp(\frac{\epsilon}{2\Delta})}\right)^{n-i} \left(\frac{\exp(\frac{\epsilon}{2\Delta})}{1+\exp(\frac{\epsilon}{2\Delta})}\right)^i\\
&=& S(n,\lfloor \frac{n}{2}\rfloor)\\
&=&  S(n,\lceil \frac{n}{2}\rceil -1),
\end{eqnarray*}
which is exactly a truncated binomial distribution with the probability $p=\frac{1}{1+exp(\frac{\epsilon}{2\Delta})}$. As $\epsilon \geq 0$ and $\Delta >0$, then $$p \leq \frac{1}{2} \leq \frac{\lfloor\frac{n}{2} \rfloor+1}{n+1} =\frac{n-\lceil\frac{n}{2} \rceil+1}{n+1}.$$ Then, by Property \ref{prop3}, our probability $\sum_{i \geq |D|/2} \Pr[q = i]$ is almost monotonically increasing w.r.t. $n$.
\end{proof}
\subsection{Success probability and privacy budget}

Now, let us consider the relation between the probability $\sum_{i \geq n/2} \Pr\left[q = i\right]$ and the privacy budget $\epsilon$.

\begin{property}
The truncated binomial distribution $\Pr\left[I_{n,p} \leq j\right]$ is monotonically decreasing w.r.t. probability $p$. Moreover, if $p=0$, then $\Pr\left[I_{n,p} \leq j\right]=1$ constantly; if $p=1$, then $\Pr\left[I_{n,p} \leq j\right]=0$.
\end{property}
\begin{corollary}
The probability that the score taken in EM is greater than or equal to half of the possible maximum, $\sum_{i \geq n/2}\Pr[q = i]$, is monotonically increasing w.r.t. the privacy budget $\epsilon$. Further, $\lim\limits_{\epsilon \rightarrow \infty}\sum\limits_{i \geq n/2}\Pr[q = i] =1$. 
\end{corollary}
\begin{proof}
Since the mean of the flipping probability $p=\frac{1}{1+\exp(\frac{\epsilon}{2\Delta})}$ is negatively related to $\epsilon$, then it is trivial.
\end{proof}

\subsection{Recommendations on privacy budget decisions}

By the results in the previous two subsections, we can recommend the appropriate value of $\epsilon$ on each $n$. First, we recall that the results we have so far are as follows:
\begin{theorem} The probability that the score taken in EM is greater than or equal to half of the possible maximum, $\sum_{i \geq n/2}\Pr[q = i]$, is (almost) monotonic w.r.t. $n$ and $\epsilon$ when the limits tend to $1$.
\end{theorem}
Above results are clean from involving truncated point $j$, because our proofs did not involve the \textit{concentration inequalities} \cite{HighDim2018}. To apply those inequalities, we need to discuss the sign of the subtraction $j-E(\sum_{i=1}^{n}X_i) =j- np$. Therefore, we are forced to discuss the relation between $p$ with the truncation parameter $j$, which would extend to unnecessarily complicated proof procedures for above clean results. Adding more factors to our analysis, we will use \textit{central limit theorem} and \textit{concentration inequalities} for the following proposition, see \ref{app1}ppendix. And as one can see, $j$ is also involved in the relations.

\begin{table}[htbp]
\caption{referred to privacy budgets for 99.9\% confidence \\(empty cells are not applicable)}
\centering
\setlength{\tabcolsep}{0.6mm}{
\begin{tabular}{|c|c|c|c|c|c|c|c|c|c|c|}
\hline
\diagbox[innerwidth=1cm]{$n$}{Flip} & 50\% & 45\% & 40\% & 35\% & 30\% & 25\% & 20\% & 15\% & 10\% & 5\% \\
\hline
100 & 1.562 & 2.049 & 2.600 & 3.255 & 4.098 & 5.360 & 8.487 &  &  &  \\
\hline
1000 & 0.472 & 0.884 & 1.316 & 1.779 & 2.292 & 2.884 & 3.610 & 4.597 & 6.293 &  \\
\hline
10000 & 0.149 & 0.552 & 0.967 & 1.404 & 1.875 & 2.401 & 3.014 & 3.777 & 4.847 & 6.857 \\
\hline
100000 & 0.047 & 0.449 & 0.860 & 1.290 & 1.751 & 2.260 & 2.847 & 3.563 & 4.529 & 6.151 \\
\hline
1000000 & 0.015 & 0.416 & 0.826 & 1.254 & 1.712 & 2.217 & 2.796 & 3.499 & 4.436 & 5.969 \\
\hline
\end{tabular}}
\end{table}
\begin{table}[htbp]
\caption{referred to privacy budgets for 95\% confidence \\(empty cells are not applicable)}
\centering
\setlength{\tabcolsep}{0.6mm}{
\begin{tabular}{|c|c|c|c|c|c|c|c|c|c|c|}
\hline
\diagbox[innerwidth=1cm]{$n$}{Flip} & 50\% & 45\% & 40\% & 35\% & 30\% & 25\% & 20\% & 15\% & 10\% & 5\% \\
\hline
100 & 0.999 & 1.438 & 1.913 & 2.444 & 3.065 & 3.844 & 4.950 & 7.123 &  &  \\
\hline
1000 & 0.310 & 0.717 & 1.139 & 1.588 & 2.078 & 2.634 & 3.297 & 4.155 & 5.458 & 8.944 \\
\hline
10000 & 0.098 & 0.501 & 0.913 & 1.347 & 1.813 & 2.330 & 2.929 & 3.668 & 4.683 & 6.476 \\
\hline
100000 & 0.031 & 0.433 & 0.843 & 1.272 & 1.732 & 2.239 & 2.821 & 3.531 & 4.482 & 6.058 \\
\hline
1000000 & 0.010 & 0.411 & 0.821 & 1.249 & 1.706 & 2.210 & 2.788 & 3.488 & 4.422 & 5.941 \\
\hline
\end{tabular}}
\end{table}
\begin{proposition} For a given dataset of size $n$, to leverage EM, with minimum probability $P$, the score should be no less than $n-j$, i.e., $\sum_{i\geq n - j} \Pr[q=i] \geq P$, and we need a privacy budget to be no less than $2\Delta \log\frac{n-j+\sqrt{-n\log(1-P)/2}}{j-\sqrt{-n\log(1-P)/2}}$.
\label{pros2}
\end{proposition}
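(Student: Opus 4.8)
The plan is to reduce the statement to a one–sided tail estimate for a binomial random variable and then invert a Hoeffding-type concentration bound to get the condition on $\epsilon$. First I would rewrite the success probability exactly, mirroring the computation used in the proof of the first corollary above: setting $p := \frac{1}{1+\exp(\epsilon/2\Delta)}$ for the effective flipping probability, the number of disagreements $F := n - q$ between the released table and $D^{*}$ is distributed as $I_{n,p}$, so that
\[
\sum_{i \geq n-j}\Pr[q=i] \;=\; \Pr[F \leq j] \;=\; \Pr\!\left[I_{n,p} \leq j\right].
\]
Consequently the target $\sum_{i\geq n-j}\Pr[q=i]\geq P$ is equivalent to $\Pr[I_{n,p}>j]\leq 1-P$, so it suffices to drive the upper tail of $I_{n,p}$ below $1-P$.

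Second, I would apply Hoeffding's inequality to $I_{n,p}=\sum_{k=1}^{n}X_k$ with $X_k\in\{0,1\}$, $\mathbb{E}[X_k]=p$: for all $t\geq 0$, $\Pr[I_{n,p}-np\geq t]\leq \exp(-2t^2/n)$. Taking $t = j-np$ and bounding $\Pr[I_{n,p}>j]\leq \Pr[I_{n,p}-np\geq j-np]$, it is enough to impose $\exp(-2(j-np)^2/n)\leq 1-P$, i.e. $(j-np)^2\geq -\tfrac{n}{2}\log(1-P)$. In the regime $np\leq j$ this reads $np\leq j-\sqrt{-n\log(1-P)/2}$, i.e. $p\leq \frac{j-\sqrt{-n\log(1-P)/2}}{n}$; note this choice also makes $t=j-np\geq \sqrt{-n\log(1-P)/2}\geq 0$, so the application is self-consistent.

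Third, I would solve the last inequality for $\epsilon$. Substituting $p=\frac{1}{1+\exp(\epsilon/2\Delta)}$ and rearranging gives $\exp(\epsilon/2\Delta)\geq \frac{n-j+\sqrt{-n\log(1-P)/2}}{\,j-\sqrt{-n\log(1-P)/2}\,}$, hence $\epsilon\geq 2\Delta\log\frac{n-j+\sqrt{-n\log(1-P)/2}}{\,j-\sqrt{-n\log(1-P)/2}\,}$, which is the stated bound, and the chain of implications shows this is sufficient for the guarantee. Along the way I would record the feasibility condition $j>\sqrt{-n\log(1-P)/2}$ needed for the logarithm's argument to be defined and positive; when it fails no finite budget reaches confidence $P$, which is exactly why the corresponding cells in the two tables are left empty. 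I would also observe that $\epsilon\geq 0$ forces $p\leq \tfrac12$, consistent with $np\leq n/2\leq j$ whenever the intended score threshold satisfies $n-j\geq n/2$.

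The main obstacle — and essentially the only non-mechanical choice — is picking the right concentration tool: the factor $2$ inside $\sqrt{-n\log(1-P)/2}$ pins the argument to Hoeffding's $\exp(-2t^2/n)$ rather than a Chernoff/Bernstein bound (which would carry a $p(1-p)$ variance factor) or a raw Gaussian-tail estimate; the central limit theorem enters only as a heuristic cross-check that the bound is of the correct order and nearly tight for large $n$. Everything else — the algebraic inversion for $\epsilon$, and the bookkeeping of floors, ceilings, and strict versus non-strict inequalities — is routine.
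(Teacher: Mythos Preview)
Your proposal is correct and follows essentially the same route as the paper: identify $\sum_{i\geq n-j}\Pr[q=i]=\Pr[I_{n,p}\leq j]$ with $p=\frac{1}{1+\exp(\epsilon/2\Delta)}$, apply Hoeffding's inequality to bound the upper tail by $\exp(-2(j-np)^2/n)$, set this $\leq 1-P$, and invert to obtain the stated lower bound on $\epsilon$. Your additional remarks on the feasibility condition $j>\sqrt{-n\log(1-P)/2}$ (explaining the empty table cells) and on why Hoeffding rather than Chernoff or Bernstein is the natural choice here are helpful elaborations not spelled out in the paper, but the core argument is the same.
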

\begin{proof}
Set $p=\frac{1}{1+\exp(\frac{\epsilon}{2\Delta})} < \frac{1}{2}$.
We use $X_i$ to denote the event for each trial. We then have $E[\sum_{i=1}^{n}X_i] = np$, and we apply Hoeffding's inequality \cite{HighDim2018} of $0\leq X_i \leq 1$ for all $i$:
\begin{eqnarray*}
&&\sum_{i\geq n - j}  \Pr[q=i]\\
&=& \Pr\left[I_{n,p} \leq j\right]\\
&=& \Pr\left[\sum_{i=1}^{n}X_i \leq j\right] = 1 -\Pr\left[\sum_{i=1}^{n}X_i > j\right] \\
&=& 1 -\Pr\left[\sum_{i=1}^{n}X_i -np> j-np\right] \\
&\geq& 1 - \exp\left(-\frac{2(j-np))^2}{n}\right)\\
&\geq& P 
\end{eqnarray*}
implies
$$\exp\left(-\frac{2(j-np))^2}{n}\right) \leq 1- P,$$
$$p \leq \frac{j-\sqrt{-n\log(1-P)/2}}{n}.$$
By substituting back $\epsilon$, we obtain
$$\epsilon \geq 2\Delta \log\frac{n-j+\sqrt{-n\log(1-P)/2}}{j-\sqrt{-n\log(1-P)/2}}.$$

\end{proof}
\begin{corollary}
To ensure a $99.9\%$ probability of having a flipping rate that is no greater than $\frac{1}{2}$, we need a privacy budget to be no less than $2\Delta \log\frac{1+2\sqrt{(1.5\log10)/n}}{1-2\sqrt{(1.5\log10)/n}}$.
\end{corollary}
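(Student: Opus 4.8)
The plan is to obtain this corollary as an immediate specialization of Proposition~\ref{pros2}. The event ``the flipping rate is no greater than $\tfrac12$'' is exactly the event ``the score satisfies $q \geq n/2$'', which in the notation of Proposition~\ref{pros2} corresponds to choosing the truncation parameter $j$ so that $n-j = n/2$, i.e.\ $j = n/2$ (for odd $n$ one may instead take $j=\lceil n/2\rceil$; this changes nothing essential below). So I would begin by setting $P = 0.999$ and $j = n/2$ in the conclusion of Proposition~\ref{pros2}.

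Next I would simplify the constant $\sqrt{-n\log(1-P)/2}$. With $P = 0.999$ we have $1-P = 10^{-3}$, hence $\log(1-P) = -3\log 10$ and therefore $-n\log(1-P)/2 = \tfrac32 n\log 10 = 1.5\,n\log 10$, so that $\sqrt{-n\log(1-P)/2} = \sqrt{1.5\,n\log 10}$. Substituting $j = n/2$ and this value into the bound $\epsilon \ge 2\Delta\log\frac{n-j+\sqrt{-n\log(1-P)/2}}{\,j-\sqrt{-n\log(1-P)/2}\,}$ of Proposition~\ref{pros2} yields $\epsilon \ge 2\Delta\log\frac{n/2 + \sqrt{1.5\,n\log 10}}{\,n/2 - \sqrt{1.5\,n\log 10}\,}$.

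Finally I would divide numerator and denominator inside the logarithm by $n/2$ and use $\sqrt{1.5\,n\log 10}\big/(n/2) = 2\sqrt{(1.5\log 10)/n}$ to reach the claimed form $2\Delta\log\frac{1 + 2\sqrt{(1.5\log 10)/n}}{\,1 - 2\sqrt{(1.5\log 10)/n}\,}$. The only point that needs a word of care — and the closest thing to an obstacle — is that this expression is meaningful only when the denominator $j - \sqrt{-n\log(1-P)/2}$ is positive, i.e.\ $n/2 > \sqrt{1.5\,n\log 10}$, equivalently $n > 6\log 10 \approx 13.8$; for smaller $n$ the Hoeffding bound underlying Proposition~\ref{pros2} is simply too weak to certify $99.9\%$ confidence at flipping rate $\tfrac12$, which is why the corresponding cells are marked ``not applicable'' in the tables. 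Everything else is a routine substitution, so no further argument is required.
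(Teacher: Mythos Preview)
Your proposal is correct and follows essentially the same route as the paper: both specialize the Hoeffding-based bound of Proposition~\ref{pros2} to $P=0.999$ and $j=n/2$ and then simplify. The paper's write-up re-runs the Hoeffding step directly with $j=\lceil n/2\rceil$ rather than quoting Proposition~\ref{pros2} as a black box, but the argument is the same; your algebraic simplification to reach the stated form is exactly right. One small aside: your remark tying the positivity condition $n>6\log 10$ to the ``not applicable'' cells in the tables is slightly off --- those empty cells occur at \emph{small} target flip rates (e.g.\ $5\%$, $10\%$) for small $n$, not at the $50\%$ column, since the tables start at $n=100$.
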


The tables on the left side provide the suggestions for privacy budgets of commonly referred to scenarios with sensitivity $1$.\\\\
\noindent\textbf{Remark: }The cases for the speeds of the increments of truncated points to be $\frac{1}{2}$ (flipping rate no more than $\frac{1}{2}$) of $n$ have been proven. We may further derive the results for different increment speeds following similar steps. However, since other speed factors are not our major concerns in our experiments of EM, we will not spend space to discuss here. 
\subsection{Exponential mechanism for large dataset}
Our analyses on the success probability has been completed in the previous subsection. In this section, we use the results obtained to show a relationship between the EM and the randomized response in large dataset. 

After the trend of truncated binomial distribution has been learned, we may further investigate the behaviors in extreme situations––for example, $n\rightarrow \infty$. According to De Moivre–Laplace theorem (Central Limit Theorem for binary distributions \cite{Ajoy2018}), a binomial distribution will converge to a normal distribution $\mathcal{N}(np, np(1-p))$. Moreover, the probability mass will converge to the probability density:
$${n\choose i}p^i(1-p)^{n-i} \overset{a}{\rightarrow} \frac{1}{\sqrt{2\pi n p(1-p)}}\exp\left(-\frac{(i-np)^2}{2np(1-p)}\right).$$
Hence, we may use an integral to approximate our truncated sum:
$$\sum_{i=j}^{k} {n\choose i}p^i(1-p)^{n-i}$$
$$\overset{a}{\rightarrow} \frac{1}{\sqrt{2\pi n p(1-p)}} \int_{j}^{k} \exp\left(-\frac{(i-np)^2}{2np(1-p)}\right) di.$$

Now, let us omit factor $k$ for context simplicity. 

\begin{property}\label{prop5}
The truncated binomial distribution $S(n,\lceil\frac{n}{2} \rceil)$ approaches $1$ if $p <\frac{1}{2}$, $0$ if $p >\frac{1}{2}$, and $\frac{1}{2}$ if $p=\frac{1}{2}$, as $n \rightarrow \infty $.
\end{property}
\begin{corollary} For any $p < \frac{1}{2}  $ (or $p > \frac{1}{2} $ respectively), there always exists a positive integer $R$ such that $S(n+r,\lceil\frac{n+r}{2} \rceil+k) \geq S(n,\lceil\frac{n}{2} \rceil+k)$ when for all $r > R$ (or $S(n+r,\lceil\frac{n+r}{2} \rceil+k) \leq S(n,\lceil\frac{n}{2} \rceil+k)$, respectively). 
\end{corollary}
\begin{proof}
By Property \ref{prop3}, we know $ S(n,\lceil\frac{n}{2} \rceil+k)$ is monotonic w.r.t. parities. However, by Property \ref{prop2} $$S(2m+1,\lceil\frac{2m+1}{2} \rceil+k) \leq S(2m,\lceil\frac{2m}{2} \rceil+k).$$ Moreover, it is not necessary that for a fixed odd number $R_0$ is such that $$S(2m+R_0,\lceil\frac{2m+R_0}{2} \rceil+k) \geq S(2m,\lceil\frac{2m}{2} \rceil+k).$$
And, Property \ref{prop5} tells us that
$$\lim\limits_{odd\ R_0 \rightarrow \infty}S(2m+R_0,\lceil\frac{2m+R_0}{2} \rceil+k) =1 \geq S(2m,\lceil\frac{2m}{2} \rceil+k).$$
Thus, using the definition of the limit, some interchange point $R$ must exist.
\end{proof}

\begin{property} For any $\delta \in (0,1]$, the truncated binomial distribution $S(n,n(p-\delta),n(p+\delta))$ approaches $1$ as $n \rightarrow \infty $.
\label{prop6}
\end{property}
\begin{figure}[htbp]
    \centering
    \includegraphics[scale=0.2]{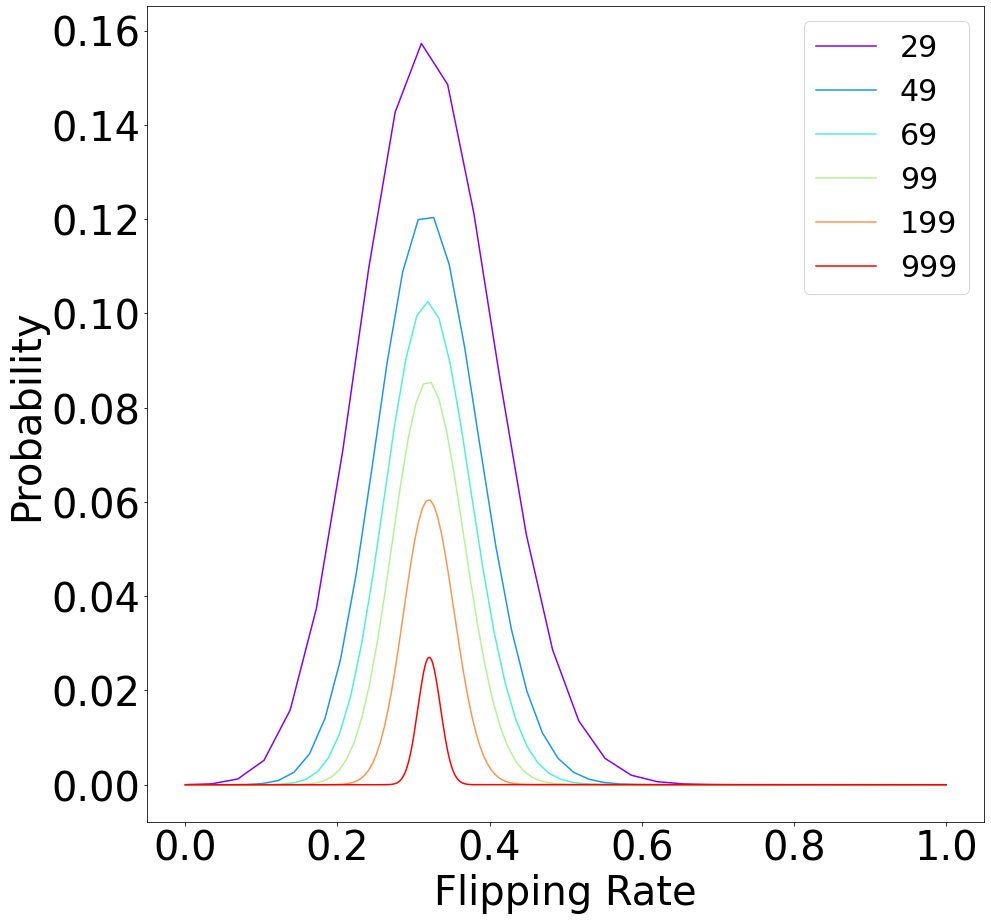}
    \caption{Example EMs with same $\epsilon=1.5$ (mean flipping probability $p=0.321$). As $n$ gets larger, the magnitude of the cumulative distribution accumulates over a narrower interval. Trend of degradation can be viewed.}
    \label{fig:my_label3}
\end{figure}
\begin{corollary} EM degrades to RR as $n \rightarrow \infty $.
\end{corollary}
\begin{proof}
Property \ref{prop6} states that the magnitude of the distribution would stack fully on one point if $n \rightarrow \infty $, which means the flip rate of labels would become fixed to  $p=\frac{1}{1+\exp(\frac{\epsilon}{2\Delta})}$. Therefore, the two-step EM actually becomes a one-step RR.
\end{proof}
The illustration showing how EM degrades to RR is shown in Fig. \ref{fig:my_label3}.
\begin{table*}[htbp]
\caption{mean and standard deviation of AUC/BER for 'AIRPLANE vs CAT' with different $\epsilon$, where $n=1000, 6000, 10000$}
\centering
\setlength{\tabcolsep}{0.8mm}{
\begin{tabular}{|c|c|c|c|c|c|c|c|c|c|c|c|c|c|c|}
\hline
\multirow{2}{*}{\diagbox[innerwidth=1cm]{$\epsilon$}{Loss}} & \multicolumn{2}{c|}{Barrier} & \multicolumn{2}{c|}{Sigmoid} & \multicolumn{2}{c|}{Unhinged} & \multicolumn{2}{c|}{Savage} & \multicolumn{2}{c|}{Logistic} & \multicolumn{2}{c|}{Squared} & \multicolumn{2}{c|}{Hinge}  \\
&AUC& BER& AUC & BER & AUC & BER & AUC &BER & AUC &BER & AUC & BER & AUC & BER\\
\hline
\multirow{3}{*}{0.1} & 52.5(7.7) & \textbf{51.5(13.2)} & 52.6(7.7) & 48.7(6.6) & 53.5(5.6) & 49.2(5.8) & \textbf{53.6(8.1)}& 51.1(3.8) & 51.0(6.7) & 50.5(1.9) & 52.5(6.0) & 50.7(4.3) & 52.4(5.9) & 51.0(4.0) \\
& \textbf{54.2(5.4)} & \textbf{59.5(5.3)} & 50.9(2.6)& 56.4(3.8) & 50.9(2.2) & 51.5(1.7) & 50.5(2.5)& 51.7(1.7) & 50.5(2.3) & 51.7(1.9) & 51.3(2.6) & 51.8(0.5) & 50.7(2.5) &  52.3(1.4) \\
& \textbf{55.1(4.2)} & \textbf{59.7(7.9)} & 50.9(1.9) & 58.2(3.7) & 52.3(1.5) & 51.2(1.6) & 51.7(1.8)& 50.2(1.2) & 51.7(1.4) & 50.5(1.1) & 51.4(1.7) & 50.6(1.4) & 51.0(1.7) & 50.4(1.0) \\
\hline
\multirow{3}{*}{0.5} & \textbf{76.0(7.1)} & \textbf{73.6(4.5)} & 61.1(3.8) & 65.3(8.2) & 58.7(5.1) & 56.9(4.0) & 60.0(4.8)& 56.5(3.4) & 61.4(5.6) & 55.3(6.5) & 60.4(3.8) & 55.3(2.7) & 59.4(2.6) & 58.1(4.0) \\
& \textbf{81.8(4.0)} & \textbf{80.5(2.0)} & 57.9(3.1)& 73.8(2.8) & 57.5(2.8) & 56.3(1.5) & 57.7(2.4)& 56.1(1.9) & 58.2(2.3) & 56.3(2.3) & 56.8(2.0) & 56.6(1.5) & 56.6(2.2) &  55.9(1.7) \\
& \textbf{86.3(1.4)} & \textbf{83.7(1.7)} & 58.7(1.5) & 79.4(2.3) & 58.9(1.7) & 56.8(2.9) & 58.6(1.7)& 56.1(2.0) & 58.7(2.1) & 56.4(2.6) & 58.5(2.0) & 56.0(1.9) & 58.2(1.6) & 56.4(3.1) \\
\hline
\multirow{3}{*}{1} & \textbf{84.4(2.4)} & \textbf{77.9(2.5)} & 67.7(3.2) & 76.1(3.3) & 65.8(3.4) & 60.7(4.3) & 65.3(4.9)& 61.6(4.6) & 66.2(3.8) & 61.7(2.3) & 63.5(3.9) & 60.5(4.5) & 67.6(1.7) &  60.9(6.0) \\
& \textbf{91.4(0.5)} & \textbf{83.2(2.0)} & 67.6(2.3)& 81.9(1.5) & 65.4(2.0) & 61.5(1.6) & 66.8(2.8)& 62.2(2.4) &65.7(3.3) & 62.7(1.0) & 65.3(2.3) & 61.6(2.2) & 65.8(2.1) &  61.9(1.8) \\
& \textbf{92.3(0.9)} & \textbf{86.2(0.6)} & 67.7(2.1) & 83.1(1.4) & 66.0(1.6) & 63.6(1.3) & 67.0(0.6)& 61.9(1.4) & 67.0(1.3) & 62.5(2.0) & 65.7(1.1) & 62.1(1.1) & 66.5(0.7) & 62.4(1.5) \\
\hline
\multirow{3}{*}{1.5} & \textbf{87.0(2.0)} & \textbf{80.2(1.8)} & 74.5(4.0) & 77.9(3.8) & 71.0(4.1) & 66.2(3.5) & 72.7(3.6)& 66.5(3.1) & 73.4(3.7) & 66.3(2.2) & 70.9(4.2) & 65.9(3.6)& 71.3(4.1) &  67.5(2.2) \\
& \textbf{92.9(0.5)} & \textbf{84.8(1.1)} & 74.7(1.4)& 83.9(0.6) & 72.5(1.3)& 67.8(1.9) & 73.4(2.4)& 67.4(1.1) & 73.6(2.7) & 67.6(1.7) & 72.8(2.4) & 68.0(1.4) & 72.3(1.3) &  66.7(1.6) \\
& \textbf{93.8(0.5)} & \textbf{86.9(0.4)} & 75.0(1.5) & 86.3(0.6) & 73.6(1.0) & 67.8(1.1) & 74.3(1.3)& 66.9(1.5) & 74.8(1.1) & 67.6(1.3) & 74.0(1.0) & 67.6(1.0) & 74.5(2.1) & 67.3(2.3) \\
\hline
\multirow{3}{*}{3} & \textbf{87.8(1.4)} & 81.4(1.7) & 85.9(2.0) & \textbf{83.1(1.6)} & 84.1(3.0) & 80.0(2.2) & 84.7(2.8)& 76.5(2.5) & 85.3(2.3) & 77.0(3.0) & 84.4(3.3) & 76.7(1.8) & 84.6(3.1) &  78.8(2.9) \\
& \textbf{94.1(0.3)} & 86.7(0.7) & 87.5(1.4)& \textbf{88.2(0.9)} & 86.2(1.0) & 81.1(1.4) & 87.2(1.3)& 80.1(1.0) & 87.4(1.0) & 79.2(1.5) & 87.6(1.0) & 79.9(1.5) & 86.3(0.8) &  81.2(0.9) \\
& \textbf{94.2(0.5)} & 88.0(0.5) & 89.2(1.0) & \textbf{89.6(0.4)} & 87.7(0.7) & 82.9(1.3) & 87.7(1.2)& 79.8(1.3) & 88.0(1.0) & 79.8(1.2) & 88.6(0.6) & 80.9(0.9) & 87.4(0.8) & 82.7(1.5) \\
\hline
\multirow{3}{*}{5} & 88.8(0.8) & 81.3(1.1) & 89.3(1.5) & \textbf{83.9(1.4)} & 89.8(1.7) & 82.9(2.5) & 89.7(1.5)& 83.4(2.4) & 90.1(1.5) & 82.3(2.6) & \textbf{90.4(1.6)} & \textbf{83.9(1.3)} & 89.8(2.2) &  83.6(1.3) \\
& 94.0(0.4) & 86.8(0.6) & 94.3(0.3)& \textbf{90.7(0.4)} & 93.8(0.9) & 88.9(0.9) & 94.1(0.6)& 88.7(0.8) & 94.4(0.7) & 88.0(0.9) & \textbf{95.2(0.4)} & 88.9(0.7) & 93.7(0.6) &  88.6(0.9) \\
& 94.2(0.3) & 88.8(0.5) & 94.6(0.6) & \textbf{91.3(0.4)} & 94.3(0.4) & 90.6(0.7) & 94.6(0.3)& 89.8(0.7) &94.7(0.5) & 88.6(0.6) & \textbf{95.4(0.3)} & 89.8(0.6) & 94.1(0.4) &  90.0(0.5) \\
\hline
\multirow{3}{*}{7} &  88.3(1.2) & 82.8(0.8) & 90.9(1.3) & 84.0(1.5) & 90.9(1.0) & 85.2(1.1) & 90.9(1.2)& 85.1(1.6) & 91.2(1.4)& 84.7(1.2) & \textbf{91.7(1.3)} & \textbf{85.8(0.9)} & 91.0(1.2) &  84.3(0.8) \\
& 93.8(0.7) & 87.2(0.5) & 95.7(0.4)& 90.8(0.5) &95.7(0.6) & 90.7(0.9) & 95.8(0.4)& 90.9(0.6) & 95.9(0.7) & 90.6(0.9) & \textbf{96.5(0.5)} & \textbf{91.1(0.4)} & 95.9(0.4) &  90.4(0.7) \\
& 94.7(0.4) & 88.9(0.3) & 96.7(0.3) & 91.8(0.3) & 96.5(0.3) & 91.5(0.6) & 96.6(0.3)& 92.0(0.6) & 96.5(0.2) & 91.0(0.5) & \textbf{97.2(0.2)} & \textbf{92.2(0.4)} & 96.6(0.4) & 91.7(0.5) \\
\hline
\end{tabular}}
\end{table*}
\begin{figure*}
\centering
\setkeys{Gin}{width=0.33\linewidth}
\subfigure{\includegraphics{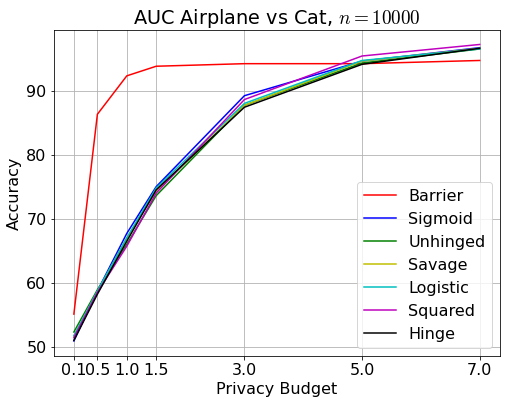}}
\hfill
\subfigure{\includegraphics{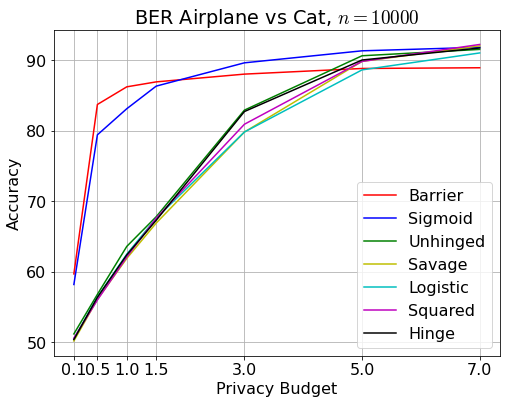}}
\hfill
\subfigure{\includegraphics{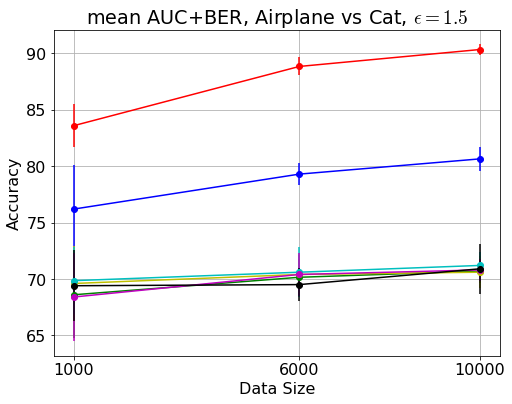}}
\caption{AUC score and balanced accuracy (BER) correspond to binary classification with varying privacy budgets and data sizes. Experiments were conducted 10 times. Means and standard deviations are displayed.}
\label{fig:U}
\end{figure*}

\section{Experiment}
Experiments were conducted on the CIFAR-10 dataset with a convolutional neural network (CNN). The AUC score and balanced accuracy (BER) were used for the evaluation. Data sizes were majorly issued to full scale (10000 for training, 2000 for testing), and $10\%$ scale (1000 for training, 200 for testing), see Table III, \ref{table12}ppendix. Privacy budgets were chosen in a common range. Labels were binary-categorized, and were processed by EM beforehand. The experiment was conducted 10 times, each consisted of 50 epochs. Other commonly used losses are comparing groups. We set the barrier hinge loss parameters as $b=200$ and $r=50$.

Overall, barrier hinge loss presents higher accuracies on medium small privacy parameters (e.g., $\epsilon \in [0.1,3]$), which results from the robustness of symmetry on the noise condition \cite{charoenphakdee2019symmetric}. It also presents acceptable accuracies for moderate privacy (e.g., $\epsilon \geq 0.5$), which could outstrip other losses for real applications. One plausible reason for the slightly lower performances of extremely clean data could be the value of symmetric range $r$. One may tune $b$ and $r$ to fit specific needs. Leveraging EM to process the data also results in higher standard deviations of accuracies than RR \cite{charoenphakdee2019symmetric}, which is reasoned from the flexibility of the flipping rate as shown in Fig. \ref{fig:my_label3}. Further, as $n$ gets larger, standard deviation decreases Fig. \ref{fig:U}. Training with small $n$ may result in underfitting, which could be explained by the convergence speed of each loss in \cite{charoenphakdee2019symmetric}. We also observe that the standard deviation of barrier hinge loss converges faster than other loss w.r.t. the increasing of $\epsilon$. While sigmoid function presents better convergence rate w.r.t. $n$, sigmoid function also presents its second optimum among all functions in our experiments, which re-confirms its success in recent years. Although in our analysis, convergence rates (e.g., of standard deviations) are not studied, it could be interesting for a further research, because the strength of oscillating of data accuracy is one elementary control corresponding to purposes.

\section{Conclusion}
We explored the performance of barrier hinge loss on EM processed private data. The experimental results show the robustness of this loss over other commonly used loss functions. Although our results were only derived from AUC and BER evaluations, the experiments can be evaluated with other standards easily. Moreover, we analyzed the properties of EM and provided numerical guidance of setting private parameters. Further, it was found that with the common definition of utility, EM would degrade to RR when the data size increases to infinity. This phenomenon is not fully understood, and is worth further investigation.

\section*{Acknowledgement}
This work is partially supported by JSPS Grant-in-Aid for Transformative Research Areas A grant number JP21H05845.
The authors would like to thank Prof. Reiji Suda for hosting Jing Bi at his lab during the course of this resarch.

\bibliography{myref} 
\bibliographystyle{IEEEtran}

\newpage
\begin{appendix}

\noindent\textbf{Property 1.} Truncated binomial distribution $S(n,j)$ is monotonically decreasing w.r.t. $n$, while $j\in [0,n]$ is fixed
\label{app1}

\begin{proof} 
The third equality is obtained from considering $I_{n,p}$ as the sum of $n$ trials, while $I_{n+1,p}$ is obtained from adding one additional trial to $I_{n,p}$.
\begin{eqnarray*}
& & S(n+1,j)\\
&=& \Pr\left[ I_{n+1,p} \leq j\right]\\
& = & \Pr\left[I_{n+1,p} \leq j|I_{n,p} < j \right]\Pr\left[I_{n,p} < j\right]\\
& & +\Pr\left[I_{n+1,p} \leq j|I_{n,p} = j \right]\Pr\left[I_{n,p} = j\right]\\
& & +\Pr\left[I_{n+1,p} \leq j|I_{n,p} \geq j+1 \right]\Pr\left[I_{n,p} \geq j+1\right]\\
& = & 1\cdot \sum_{i=0}^{j-1} {n \choose i}p^i(1-p)^{n-i} \\
& & +(1-p){n \choose j}p^j(1-p)^{n-j}+0 \\
& = & \sum_{i=0}^{j} {n \choose i}p^i(1-p)^{n-i} - p{n \choose j}p^j(1-p)^{n-j}\\
& \leq &  \sum_{i=0}^{j} {n \choose i}p^i(1-p)^{n-i}\\
&=& S(n,j).
\end{eqnarray*}

\end{proof}

\noindent\textbf{Property 2.} Truncated binomial distribution $S(n,n-k)$ is monotonically increasing w.r.t. $n$, while $k \in [0,n_0]$ and $n_0$ are fixed.

\begin{proof}
Take $n-k=j$,
\begin{eqnarray*}
& & S(n+1,n+1-k)\\
&=& S(n+1,j+1) \\
& = &\Pr\left[I_{n+1,p} \leq j+1\right]\\
& = &\Pr\left[I_{n+1,p} \leq j+1|I_{n,p} \leq j \right]\Pr\left[I_{n,p} \leq j\right]\\
& &+\Pr\left[I_{n+1,p} \leq j+1|I_{n,p} = j+1 \right]\Pr\left[I_{n,p} = j+1\right]\\
& &+\Pr\left[I_{n+1,p} \leq j+1|I_{n,p} > j+1\right]\Pr\left[I_{n,p} > j+1\right]\\
& = &1\cdot S(n,j) + (1-p){n \choose j+1}p^{j+1}(1-p)^{n-j-1}+0 \\
& \geq & S(n,j)\\
&=& S(n,n-k).
\end{eqnarray*}
\end{proof}
\noindent\textbf{Remark:} Since above increments of the truncated points have the extreme ($0$ or $1$) proposition to $n$, the above two monotonicities can be obtained without considering the distribution probability $p$. Therefore, Property 2 could also be deduced by $1-S'(n,k)$ from Property 1, while $S'(n,k)$ has the probability $1-p$. There may exist other methods of proofs--for example, taking derivatives and integrals. However, because of the complexity, we do not use them here.\\\\
\noindent\textbf{Property 3.}
Truncated binomial distribution $S(n,\lceil\frac{n}{2} \rceil+k)$ is almost monotonic w.r.t. $n$, except for when $n$ changes between even and odd––this is while $k \in [-\lceil\frac{n_0}{2} \rceil, \lfloor \frac{n_0}{2} \rfloor]$, and $n_0$ is fixed. Moreover, if $p \leq \frac{n-\lceil\frac{n}{2} \rceil-k}{n+1} $, $S$ is almost monotonically increasing; and is decreasing if $p \geq \frac{n-\lceil\frac{n}{2} \rceil-k}{n+1} $, respectively. 

\begin{proof}
Due to the existence of the term $\lceil\frac{n}{2} \rceil$, we have to consider the parity of $n$. Taking $\lceil\frac{n}{2} \rceil+k = j$ for simplicity.\\
If $n$ is odd and $n + 1$ is even, then 
$S(n+1,\lceil\frac{n+1}{2} \rceil+k)= S(n+1,j+1) \geq S(n,j)$.
If $n$ is even and $n + 1$ is odd, then 
$S(n+1,\lceil\frac{n+1}{2} \rceil+k)= S(n+1,j) \leq S(n,j)$.
Now, considering $n$ increases as always being odd numbers (or even respectively), then:
\begin{eqnarray*}
&&S(n+2,\lceil\frac{n+2}{2} \rceil+k)\\
&=&  S(n+2,j+1)\\
&=&\Pr\left[I_{n+2,p} \leq j+1\right]\\
&=&\Pr\left[I_{n+2,p} \leq j+1|I_{n,p} \leq j-1\right]\Pr\left[I_{n,p} \leq j-1\right]\\
&& + \Pr\left[I_{n+2,p} \leq j+1|I_{n,p} = j\right]\Pr\left[I_{n,p} =j\right] \\
&&+\Pr\left[I_{n+2,p} \leq j+1|I_{n,p} = j+1\right]\Pr\left[I_{n,p} =j+1\right] \\
&&+\Pr\left[I_{n+2,p} \leq j+1|I_{n,p} > j+1\right]\Pr\left[I_{n,p} >j+1\right]\\
&=& 1 \cdot S(n,j-1) \\
&&+ (2p(1-p)+(1-p)^2){n \choose j}p^j(1-p)^{n-j} \\
&&+ (1-p)^2 {n\choose j+1}p^{j+1}(1-p)^{n-j-1} +0 \\
&=&S(n,j)-{n \choose j}p^j(1-p)^{n-j} \\
&&+ (2p(1-p)+(1-p)^2){n\choose j}p^j(1-p)^{n-j}\\
&&+ (1-p)^2 {n\choose j+1}p^{j+1}(1-p)^{n - j - 1}\\
&=&S(n,j)+\left[-1+2p(1-p)+(1-p)^2 \right.\\
&&\left. + p(1-p)\frac{n-j}{j+1}\right] {n \choose j}p^j(1-p)^{n-j}\\
&=&S(n,j)+\left[-p^2 + (p-p^2)\frac{n-j}{j+1}\right] {n \choose j}p^j(1-p)^{n-j}\\
&=&S(n,j)+\left[p(\frac{n-j}{j+1}-\frac{n+1}{j+1}p)\right] {n \choose j}p^j(1-p)^{n-j}.\\
\end{eqnarray*}
Hence, if $$\frac{n-j}{j+1}-\frac{n+1}{j+1}p \geq 0,$$ i.e., 
$$ p \leq \frac{n-j}{n+1},$$
then $$S(n+2,j+1) \geq S(n,j);$$ or otherwise, $$S(n+2,j+1) \leq S(n,j).$$ 
\end{proof}

\noindent\textbf{Property 4.}
The truncated binomial distribution $\Pr\left[I_{n,p} \leq j\right]$ is monotonically decreasing w.r.t. probability $p$. Moreover, if $p=0$, then $\Pr\left[I_{n,p} \leq j\right]=1$ constantly; if $p=1$, then $\Pr\left[I_{n,p} \leq j\right]=0$.
\begin{proof} 
It is easy to prove the statement the extreme cases ($p=0$ or $1$). 
For $0 < p < 1$, recall that the truncated binomial distribution is given by 
$$\Pr\left[I_{n,p} \leq j\right]=\sum_{i=0}^{j} {n\choose i}p^i(1-p)^{n-i}.$$
Now, for any $p\in [0,1)$, consider a small $\delta > 0$ such that $p+\delta \in (0,1]$, and compare the ratios of two neighboring terms with different given probabilities:
\begin{eqnarray*}
&&\frac{{n\choose i+1} p^{i+1}(1-p)^{n-i-1}}{{n\choose i} p^{i}(1-p)^{n-i}}=\frac{{n\choose i+1} p}{{n\choose i} (1-p)} \\
&<& \frac{{n\choose i+1} (p+\delta)}{{n\choose i} (1-p-\delta)}= \frac{{n\choose i+1} (p+\delta)^{i+1}(1-p-\delta)^{n-i-1}}{{n\choose i} (p+\delta)^{i}(1-p-\delta)^{n-i}}.
\end{eqnarray*}
This means that terms in the truncated binomial distribution with a larger probability increase faster or decrease slower. The trend is illustrated as in Fig. \ref{fig:my_label2}.

\begin{figure}[htbp]
    \centering
    \includegraphics[scale=0.53]{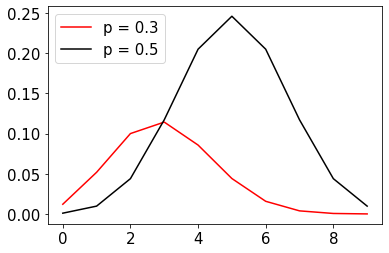}
    \caption{Example trends of ${n\choose j}p^j(1-p)^{n-j}$ for $j\in[n=9]$ w.r.t. $p=0.3$ and $p=0.5$. The trends show that an interchange of function values w.r.t. two probabilities indeed exists.}
    \label{fig:my_label2}
\end{figure}

Let $$R_j := \frac{{n\choose j}p^j(1-p)^{n-j}}{{n\choose j}(p+\delta)^j(1-p-\delta)^{n-j}}.$$ It is clear that $R_0 > 1$. Since ${n\choose i}(p+\delta)^i(1-p-\delta)^{n-i}$ increases faster or decreases slower, then for large enough $j'$, we would have $R_j < 1$ for $j > j'$. In this case, we use subtraction:
\begin{eqnarray*}
&&\Pr\left[I_{n,p} \leq j\right]\\
&=& 1- \sum_{i=j+1}^{n} {n\choose i}p^i(1-p)^{n-i}\\
&>&1- \sum_{i=j+1}^{n} {n\choose i}(p+\delta)^i(1-p-\delta)^{n-i}\\
&=& \Pr\left[I_{n,p+\delta} \leq j\right].
\end{eqnarray*}
In contrast, when $j \leq j'$, we have 
\begin{eqnarray*}
&&\Pr\left[I_{n,p} \leq j\right]\\
&=&\sum_{i=0}^{j} {n\choose i}p^i(1-p)^{n-i}\\
&>& \sum_{i=0}^{j} {n\choose i}(p+\delta)^i(1-p-\delta)^{n-i}\\
&=& \Pr\left[I_{n,p+\delta} \leq j\right].
\end{eqnarray*}
\end{proof}
\noindent\textbf{Corollary 3.} To ensure a $99.9\%$ probability of having a flipping rate that is no greater than $\frac{1}{2}$, we need a privacy budget to be no less than $2\Delta \log\frac{1+2\sqrt{(1.5\log10)/n}}{1-2\sqrt{(1.5\log10)/n}}$.
\begin{proof} Following the steps in proof of Proposition \ref{pros2}, we use Hoeffding's inequality \cite{HighDim2018}:

$$S(n,\lceil\frac{n}{2} \rceil) \geq 1 - \exp\left(-\frac{2(n(\frac{1}{2}-p))^2}{n}\right)\geq 0.999 \\$$
then 
$$\exp\left(-2n(\frac{1}{2}-p)^2\right) \leq 0.001$$
$$\epsilon \geq 2\Delta \log\frac{1+2\sqrt{(1.5\log10)/n}}{1-2\sqrt{(1.5\log10)/n}}$$
For instance, if $n=100$, then $\epsilon \geq 1.562 $; if $n=1000$, then $\epsilon \geq 0.472 $; if $n=10000$, then $\epsilon \geq 0.149 $.
\end{proof}

\noindent\textbf{Property 5.}
The truncated binomial distribution $S(n,\lceil\frac{n}{2} \rceil)$ approaches $1$ if $p <\frac{1}{2}$, $0$ if $p >\frac{1}{2}$, and $\frac{1}{2}$ if $p=\frac{1}{2}$, as $n \rightarrow \infty $.
\begin{proof}
If $p <\frac{1}{2}$, then
$$\lceil\frac{n}{2} \rceil-np \geq n(\frac{1}{2}-p)>0.$$
Taking $\frac{1}{2}-p=c$ and following similar steps in Proposition \ref{pros2} with carrying the limit:
\begin{eqnarray*}
&&\lim_{n \rightarrow \infty}S(n,\lceil\frac{n}{2} \rceil)\\
&=& \lim_{n \rightarrow \infty}\Pr\left[I_{n,p} \leq \lceil\frac{n}{2} \rceil\right]\\
&=& \lim_{n \rightarrow \infty}\Pr\left[\sum_{i=1}^{n}X_i \leq \lceil\frac{n}{2} \rceil\right] \\
&=&1 -\lim_{n \rightarrow \infty}\Pr\left[\sum_{i=1}^{n}X_i > \lceil\frac{n}{2} \rceil\right] \\
&=& 1 -\lim_{n \rightarrow \infty}\Pr\left[\sum_{i=1}^{n}X_i -np> \lceil\frac{n}{2} \rceil-np\right] \\
&\geq& 1 -\lim_{n \rightarrow \infty}\Pr\left[\sum_{i=1}^{n}X_i -np> nc\right] \\
&\geq& 1 - \lim_{n \rightarrow \infty}\exp\left(-\frac{2(nc)^2}{n}\right)\\
&=&1.
\end{eqnarray*}
If $p > \frac{1}{2}$, then 
$$\lceil\frac{n}{2} \rceil-np \leq n(\frac{1}{2}-p)+\frac{1}{2} <0.$$ 
Thus,
\begin{eqnarray*}
&&\lim_{n \rightarrow \infty}\Pr\left[\sum_{i=1}^{n}X_i \leq \lceil\frac{n}{2} \rceil\right]\\
&=& \lim_{n \rightarrow \infty}\Pr\left[\sum_{i=1}^{n}X_i -np \leq \lceil\frac{n}{2} \rceil-np\right] \\
&\leq& \lim_{n \rightarrow \infty}\Pr\left[\sum_{i=1}^{n}X_i -np \leq nc+\frac{1}{2}\right] \\
&\leq& \lim_{n \rightarrow \infty}\exp\left(-\frac{2(nc+\frac{1}{2})^2}{n}\right)\\
&=& 0.
\end{eqnarray*}
If $p = \frac{1}{2}$, it is easy to check.
\end{proof}

\noindent\textbf{Property 6.} For any $\delta \in (0,1]$, the truncated binomial distribution $S(n,n(p-\delta),n(p+\delta))$ approaches $1$ as $n \rightarrow \infty $.
\begin{proof}
\begin{eqnarray*}
&&\lim_{n \rightarrow \infty}S(n,n(p-\delta),n(p+\delta))\\
&=&\lim_{n \rightarrow \infty}\Pr\left[n(p-\delta) \leq I_{n,p} \leq n(p+\delta)\right]\\
&=& \lim_{n \rightarrow \infty}\Pr\left[n(p-\delta) \leq \sum_{i=1}^{n}X_i \leq n(p+\delta)\right]\\
&=& \lim_{n \rightarrow \infty}\Pr\left[-n\delta \leq \sum_{i=1}^{n}X_i -np \leq n\delta\right]\\
&=& 1 -\lim_{n \rightarrow \infty}\Pr\left[\sum_{i=1}^{n}X_i-np< -n\delta \right]\\
&&-\lim_{n \rightarrow \infty}\Pr\left[\sum_{i=1}^{n}X_i -np>n\delta \right] \\
&\geq& 1 - 2\lim_{n \rightarrow \infty}\exp\left(-\frac{2(n\delta)^2}{n}\right)\\
&=&1.\\
\end{eqnarray*}
\end{proof}
\newpage
\begin{table*}[ht]
\caption{mean and standard deviation of AUC/BER, where $n=10000$, $\epsilon=0.1,0.5,1,1.5,3,5,7$}
\centering
\setlength{\tabcolsep}{0.75mm}{
\begin{tabular}{|c|c|c|c|c|c|c|c|c|c|c|c|c|c|c|}
\hline
\multirow{2}{*}{Dataset} & \multicolumn{2}{c|}{Barrier} & \multicolumn{2}{c|}{Sigmoid} & \multicolumn{2}{c|}{Unhinged} & \multicolumn{2}{c|}{Savage} & \multicolumn{2}{c|}{Logistic} & \multicolumn{2}{c|}{Squared} & \multicolumn{2}{c|}{Hinge}  \\
&AUC& BER& AUC & BER & AUC & BER & AUC &BER & AUC &BER & AUC & BER & AUC & BER\\
\hline
\multirow{7}{*}{automobile} & \textbf{55.3(4.3)} & \textbf{56.5(4.9)} & 50.5(1.3) & 54.4(2.7) & 51.5(2.3) &51.1(1.1) &  51.9(1.8)& 50.6(1.3) & 51.4(1.5) & 51.2(0.9) & 52.2(1.8) & 51.1(1.4) & 51.9(1.4) & 50.7(1.0) \\
& \textbf{82.7(2.5)} &\textbf{79.0(1.3)} & 59.7(2.4) & 73.8(1.4) & 58.4(1.6) & 56.5(1.2) & 58.8(1.5)& 56.2(1.6) & 59.2(1.9) & 55.8(1.7) & 58.0(1.8) & 56.3(1.2) & 57.9(2.6) & 55.6(1.3) \\
& \textbf{92.0(0.9)} & \textbf{85.1(0.4)} & 67.7(2.7) & 81.7(0.9) & 66.3(1.5) & 62.4(1.4) & 66.6(1.8)& 62.0(1.6) & 66.7(1.5) & 62.4(1.7) & 66.9(1.4) & 62.4(1.0) & 66.8(0.8) &  62.1(1.9) \\
& \textbf{94.0(0.7)} & \textbf{85.6(0.6)} & 74.7(2.5) & 85.7(0.7) & 74.7(1.8) & 67.8(1.5) &  74.7(1.1)& 68.0(2.1) & 74.6(1.1) & 68.0(1.5) & 74.6(1.5) & 68.0(1.1) & 74.1(1.8) & 67.8(1.7)\\
& \textbf{95.3(0.4)} & 87.3(0.4) & 88.5(1.2) & \textbf{90.5(0.5)} & 87.6(0.8) & 80.9(1.2) & 88.6(0.9)& 80.2(1.0) & 88.3(0.8) & 80.2(0.9) & 88.7(0.7) &81.1(1.0) & 88.1(0.9) &  81.0(1.1) \\
& 95.9(0.4) & 87.8(0.5) & 95.2(0.7) & \textbf{92.4(0.4)} & 95.1(0.4) & 89.6(0.7) &95.2(0.4)& 89.9(0.7) & 95.1(0.3) & 88.4(0.8) & \textbf{96.1(0.3)} & 90.3(0.5) & 95.2(0.3) & 89.3(0.6) \\
& 96.0(0.2) & 87.7(0.4) & 97.3(0.3) & 92.7(0.3) & 97.0(0.2) & 91.7(0.4) & 97.2(0.2)& 92.8(0.5) & 97.3(0.1) & 92.4(0.9) & \textbf{97.7(0.3)} & \textbf{93.1(0.5)} & 97.1(0.3) & 91.8(0.4)\\
\hline
\multirow{7}{*}{bird} & \textbf{54.8(4.9)} & \textbf{59.9(3.6)} & 51.3(2.4) & 54.7(1.7) & 50.7(1.5) & 51.1(1.2) & 51.3(1.5)& 51.1(1.2) & 51.3(1.9) & 51.2(1.6) & 51.2(1.7) & 51.0(1.2) & 51.4(2.0) & 51.4(1.2) \\
& \textbf{81.3(3.5)} & \textbf{78.1(2.7)} & 58.4(2.6) & 73.4(2.5) & 57.9(1.2) & 55.2(0.9) &57.1(1.2)&56.3(1.5) &58.9(1.5) & 55.6(1.5) & 56.6(1.8) & 55.3(0.8) & 57.4(2.0) & 55.7(1.3) \\
& \textbf{89.1(0.7)} & \textbf{81.3(1.5)} & 65.9(1.5) & 79.3(1.2) & 64.5(1.8) & 60.2(2.0) &64.9(1.6)& 61.7(1.1) &64.6(1.9) & 60.2(1.3) & 65.1(1.6) & 61.1(1.4) & 64.9(1.3) & 59.7(1.6) \\
& \textbf{90.5(0.5)} & \textbf{82.2(0.9)} & 72.2(1.9) & 82.1(1.1) & 70.3(1.3) & 65.2(1.4) &71.6(1.6)&65.8(1.4) & 71.1(1.4) & 64.9(1.1) & 71.3(0.8) & 64.7(0.9) & 70.9(1.4) & 65.5(1.3) \\
& \textbf{91.2(0.5)} & 83.5(0.5) & 85.7(0.8) & \textbf{85.9(0.7)} & 83.9(0.8) & 78.7(1.2) & 84.3(1.1)&77.4(1.1) & 84.4(1.1) & 76.4(1.3) & 85.0(0.8) & 77.0(0.9) & 83.9(1.0) & 78.1(1.3)\\
& 91.4(0.3) & 83.9(0.3) & 91.6(0.7) & \textbf{87.7(0.4)} & 91.4(0.8) & 85.1(0.5) &91.7(0.5)&85.0(0.8) & 91.5(0.5) &83.6(0.6) & \textbf{92.2(0.6)} & 85.0(0.6) & 91.1(0.6) &  85.2(0.9) \\
& 91.6(0.3) & 84.0(0.4) & 94.0(0.4) & \textbf{87.8(0.5)} & 93.6(0.4) & 87.2(0.5) &93.8(0.3)& \textbf{87.8(0.6)} & 93.9(0.3) &86.8(0.9) & \textbf{94.2(0.5)} & 87.6(0.7) & 93.7(0.4) &  86.9(0.6) \\
\hline
\multirow{7}{*}{cat} & \textbf{55.1(4.2)} & \textbf{59.7(7.9)} & 50.9(1.9) & 58.2(3.7) & 52.3(1.5) & 51.2(1.6) & 51.7(1.8)& 50.2(1.2) & 51.7(1.4) & 50.5(1.1) & 51.4(1.7) & 50.6(1.4) & 51.0(1.7) & 50.4(1.0) \\
& \textbf{86.3(1.4)} & \textbf{83.7(1.7)} & 58.7(1.5) & 79.4(2.3) & 58.9(1.7) & 56.8(2.9) & 58.6(1.7)& 56.1(2.0) & 58.7(2.1) & 56.4(2.6) & 58.5(2.0) & 56.0(1.9) & 58.2(1.6) & 56.4(3.1) \\
& \textbf{92.3(0.9)} & \textbf{86.2(0.6)} & 67.7(2.1) & 83.1(1.4) & 66.0(1.6) & 63.6(1.3) & 67.0(0.6)& 61.9(1.4) & 67.0(1.3) & 62.5(2.0) & 65.7(1.1) & 62.1(1.1) & 66.5(0.7) & 62.4(1.5) \\
& \textbf{93.8(0.5)} & \textbf{86.9(0.4)} & 75.0(1.5) & 86.3(0.6) & 73.6(1.0) & 67.8(1.1) & 74.3(1.3)& 66.9(1.5) & 74.8(1.1) & 67.6(1.3) & 74.0(1.0) & 67.6(1.0) & 74.5(2.1) & 67.3(2.3) \\
& \textbf{94.2(0.5)} & 88.0(0.5) & 89.2(1.0) & \textbf{89.6(0.4)} & 87.7(0.7) & 82.9(1.3) & 87.7(1.2)& 79.8(1.3) & 88.0(1.0) & 79.8(1.2) & 88.6(0.6) & 80.9(0.9) & 87.4(0.8) & 82.7(1.5) \\
& 94.2(0.3) & 88.8(0.5) & 94.6(0.6) & \textbf{91.3(0.4)} & 94.3(0.4) & 90.6(0.7) & 94.6(0.3)& 89.8(0.7) &94.7(0.5) & 88.6(0.6) & \textbf{95.4(0.3)} & 89.8(0.6) & 94.1(0.4) &  90.0(0.5) \\
& 94.7(0.4) & 88.9(0.3) & 96.7(0.3) & 91.8(0.3) & 96.5(0.3) & 91.5(0.6) & 96.6(0.3)& 92.0(0.6) & 96.5(0.2) & 91.0(0.5) & \textbf{97.2(0.2)} & \textbf{92.2(0.4)} & 96.6(0.4) & 91.7(0.5) \\
\hline
\multirow{7}{*}{deer} & \textbf{57.9(5.3)} & \textbf{62.6(4.6)} & 52.1(2.5) & 59.2(5.0) &52.4(2.4) & 52.0(1.2) & 51.9(1.1)& 51.7(1.0) & 52.1(1.0) & 51.8(1.5) & 52.4(1.6) & 51.6(1.1) & 51.9(1.8) &  51.2(1.4) \\
& \textbf{89.2(1.1)} & \textbf{84.4(1.2)} & 60.1(2.1) & 79.6(2.1) & 59.2(1.9) & 56.0(1.1) & 58.0(1.4)& 56.1(1.1) & 58.4(2.7) & 56.4(2.0) & 58.2(2.0) & 55.7(0.9) & 58.0(1.6) &  55.9(1.5) \\
& \textbf{92.7(0.9)} & \textbf{86.9(0.6)} & 65.9(1.5) & 83.4(1.1) & 65.1(1.7) & 62.2(1.2) & 66.3(1.8)& 61.9(1.4) & 66.1(1.2) & 62.1(1.4) & 65.6(1.4) & 61.5(1.5) & 65.3(1.3) & 61.7(2.2) \\
& \textbf{94.1(0.6)} & \textbf{87.6(0.6)} & 74.0(2.0) & 86.6(0.7) & 74.7(1.9) & 67.8(1.4) & 73.4(1.6)& 68.0(1.6) & 74.5(1.7) & 67.5(1.3) & 74.2(1.6) & 67.5(1.4) & 73.4(1.4) &  67.4(1.1) \\
& \textbf{94.8(0.6)} & 88.7(0.5) & 88.5(2.0) & \textbf{91.0(0.7)} & 88.3(1.1) & 82.3(1.3) & 88.9(0.7)& 79.9(1.0) & 88.8(0.8) & 80.1(0.8) & 89.4(0.9) & 81.4(0.7) & 88.1(0.8) &  82.3(1.2) \\
& 95.7(0.3) & 89.4(0.8) & 95.6(0.6) & \textbf{92.4(0.5)} & 95.2(0.5) & 90.5(0.8) & 95.4(0.4)& 90.0(0.9) & 95.6(0.5) & 89.0(0.4) & \textbf{96.2(0.4)} & 90.2(0.5) & 95.1(0.6) & 90.5(0.7) \\
& 95.4(0.3) & 89.8(0.5) & 97.4(0.3)& 93.0(0.5) & 97.2(0.3) & 92.9(0.4) & 97.3(0.2)& \textbf{93.6(0.3)} & 97.4(0.3) & 92.3(0.3) & \textbf{97.7(0.2)} & 93.3(0.4) & 97.4(0.4) &  92.6(0.6) \\
\hline
\multirow{7}{*}{dog} & \textbf{58.9(3.0)} &\textbf{65.0(7.5)} & 52.5(1.8) & 59.4(5.3) & 51.2(1.1) & 52.1(1.5) & 51.8(1.5)& 52.2(1.9) & 51.7(1.4) & 51.4(1.3) & 51.2(1.3) & 52.2(1.2) & 51.9(1.9) &  52.3(1.0) \\
& \textbf{89.5(2.0)} & \textbf{84.8(2.1)} & 59.6(2.0) & 79.4(1.9) & 58.6(1.8) & 56.8(1.8) &60.0(1.0)& 56.2(1.5) & 59.6(1.8) & 56.1(0.7) & 59.5(1.6) & 56.1(1.5) & 59.1(1.2) &  55.6(1.9) \\
& \textbf{94.2(0.7)} & \textbf{88.1(0.7)} & 69.7(3.8) & 84.8(1.0) & 67.3(1.9) & 63.0(1.4) & 67.7(0.8)& 62.0(1.0) & 68.3(1.3) & 61.8(0.8) & 67.2(2.3) & 62.4(1.0) & 68.5(1.0) &  62.6(1.5) \\
& \textbf{95.6(0.7)} & \textbf{89.2(1.6)} & 76.3(2.4) & 88.6(0.7) & 74.0(1.1) & 68.3(1.5) & 74.6(1.0)& 68.0(1.6) & 74.6(0.5) & 66.7(1.6) & 74.7(1.1) & 68.3(0.9) & 74.4(1.4) &  68.2(1.5) \\
& \textbf{96.0(0.4)} & 90.9(1.3) & 89.3(1.3) & \textbf{91.9(0.8)} & 88.9(0.9) & 83.9(0.8) & 89.4(0.7)& 81.0(0.8) & 89.6(0.8) & 81.4(1.4) & 90.5(0.7) & 82.4(1.1) & 89.01(0.8) &  83.5(0.9) \\
& 96.4(0.4) & 91.4(0.5) & 95.6(0.5)& \textbf{93.3(0.3)} & 95.5(0.4) & 91.7(0.5) & 95.7(0.3)& 90.9(0.7) & 95.6(0.4) & 89.7(0.7) & \textbf{96.9(0.3)} & 91.4(0.6) & 95.5(0.4) &  91.5(0.6)\\
& 96.5(0.3) & 91.8(0.2) & 97.6(0.1) & 93.7(0.4) & 97.4(0.2) & 93.2(0.4) & 97.5(0.2)& 93.7(0.5) & 97.5(0.2)& 92.3(0.7) & \textbf{98.0(0.1)} & \textbf{94.0(0.3)} & 97.3(0.3) & 93.1(0.7)\\
\hline
\multirow{7}{*}{frog} & \textbf{53.6(4.0)} & \textbf{67.3(5.8)} & 51.4(1.8) & 62.7(6.9) & 51.8(1.9) & 51.6(1.6) & 51.4(2.2)& 52.3(1.8) & 51.3(1.4) & 51.9(1.6) & 51.0(1.9) & 52.1(1.7) & 51.2(2.3) & 51.5(1.3) \\
& \textbf{92.5(1.8)} & \textbf{88.0(1.7)} & 59.8(2.8) & 82.4(2.0) & 59.5(1.8) & 56.6(1.5) & 59.2(1.8)& 56.9(1.5) & 59.2(1.6) & 56.9(0.9) & 60.8(2.2) & 56.7(0.9) & 59.7(2.6) &  57.4(2.0) \\
& \textbf{96.2(0.4)} & \textbf{91.1(0.8)} & 69.4(2.7) & 89.0(0.9) & 67.3(1.6) & 63.0(1.6) & 68.6(1.1)& 62.1(1.0) & 68.6(2.3) & 63.6(1.7) & 67.9(1.3) & 63.2(1.5) & 68.2(1.6) & 63.0(1.7) \\
& \textbf{96.5(0.6)} & \textbf{92.4(0.3)} & 76.8(2.7) & 91.3(0.8) & 75.5(2.3) & 69.3(1.2) & 75.7(1.5)& 69.4(1.6) & 75.4(1.6) & 68.6(1.6) & 75.4(1.4) & 69.5(1.1) & 75.8(1.3) &  69.1(1.3) \\
& \textbf{97.0(0.3)} & 93.0(0.3) & 90.5(1.4) & \textbf{93.8(0.4)} & 91.0(0.5) & 85.2(1.4) & 91.3(0.8)& 82.9(1.1) & 90.9(0.7) & 82.7(0.8) & 91.8(0.8) &84.6(0.4) & 90.7(0.8) &  85.6(0.9) \\
& 97.2(0.4) & 93.3(0.3) & 96.9(0.4) & \textbf{95.2(0.3)} & 96.5(0.4) & 93.2(0.6) & 96.8(0.3)& 92.8(0.6) & 96.6(0.3) &91.6(0.5) & \textbf{97.8(0.1)} & 93.2(0.4) & 96.6(0.2) & 93.5(0.5) \\
& 97.6(0.2) & 93.6(0.4) & 98.5(0.3) & \textbf{95.6(0.3)} & 98.3(0.2) & 95.0(0.3)& 98.3(0.2)& 95.4(0.5) & 98.4(0.2) & 94.4(0.5) & \textbf{98.8(0.1)} & \textbf{95.6(0.4)} & 98.4(0.2) &  95.2(0.3) \\
\hline
\multirow{7}{*}{horse} & \textbf{56.1(3.9)} & \textbf{60.1(4.5)} & 52.2(1.8) & 55.3(2.6) & 52.0(1.7) & 51.7(1.0) & 50.9(1.4)& 50.8(1.1) & 51.3(1.8) & 51.5(1.2) & 51.6(1.3) & 51.2(1.1) & 51.4(1.7) &  50.9(1.4) \\
& \textbf{87.7(2.7)} & \textbf{83.3(2.0)} & 58.8(2.0) & 77.6(1.9) & 59.0(1.4) &55.6(1.2) & 59.1(2.5)& 56.3(1.3) & 59.2(2.9) & 56.1(2.0) & 59.6(2.4) & 55.6(1.0) & 58.8(1.9) &  56.0(1.4) \\
& \textbf{94.4(0.5)} & \textbf{87.3(0.7)} & 68.1(2.5) & 83.3(1.1) & 68.5(1.5) & 62.2(1.6) & 68.4(1.4)& 61.8(1.2) & 68.4(1.5) & 63.3(1.6) & 68.9(1.1) & 62.5(1.4)& 67.6(1.4) &  61.8(1.5) \\
& \textbf{96.1(0.5)} & \textbf{89.7(1.1)} & 74.9(2.5) & 88.3(0.7) & 74.2(2.0) & 69.2(1.3) & 75.0(1.9)& 68.0(1.1) & 75.0(1.9) & 69.2(1.1) & 74.6(1.8) & 68.9(1.3) & 74.8(1.5) &  68.9(1.8) \\
& \textbf{96.8(0.3)} & 91.6(0.7) & 90.1(0.6) & \textbf{92.7(0.4)} & 89.7(1.0) & 84.0(1.3) & 89.8(0.6)& 81.0(1.1) & 89.5(0.8) & 81.6(1.1) & 91.0(0.9) & 82.3(0.9) & 89.2(1.0) & 83.8(1.3) \\
& \textbf{97.2(0.2)} & 91.9(1.3) & 96.1(0.5) & \textbf{94.4(0.2)} & 96.0(0.4) & 91.9(0.5) & 96.1(0.4)& 91.7(0.9) & 96.1(0.3) & 90.7(0.6) & 96.9(0.4) & 91.7(0.5) & 95.8(0.5) & 91.4(0.8) \\
& 97.3(0.2) & 92.7(0.3) & 97.9(0.2)& \textbf{94.5(0.2)} & 97.8(0.2) & 93.7(0.6) & 98.0(0.2)& 94.2(0.4) & 98.0(0.2) & 93.3(0.4) & \textbf{98.4(0.1)} & \textbf{94.5(0.4)} & 97.8(0.2) & 93.4(0.5) \\
\hline
\multirow{7}{*}{ship} & \textbf{54.3(2.3)} & \textbf{54.5(4.6)} & 52.5(2.2) & 54.6(4.0) & 50.8(1.4) & 51.8(1.4) & 51.8(1.8)& 51.5(1.5) & 51.6(1.9) & 51.2(1.3) & 51.9(1.8) & 51.5(0.8) & 51.5(1.8) &  51.3(1.6) \\
& \textbf{70.3(2.5)} & \textbf{67.4(3.3)} & 58.2(2.4) & 66.5(1.9) & 57.2(0.9) & 55.3(1.5) & 56.8(1.4)& 55.6(1.0) & 57.2(1.6) & 55.1(1.0) & 56.5(1.6) & 54.9(1.2) & 56.6(1.8) & 55.3(1.1) \\
& \textbf{82.6(1.5)} & \textbf{73.9(1.3)} & 66.4(2.2) & 72.6(1.1) & 64.6(2.1) & 60.1(1.2) & 64.0(1.4)& 60.7(1.4) & 64.0(1.2)& 59.7(1.1) & 64.2(1.8) & 60.4(0.8) & 64.5(1.4) &  59.9(1.2) \\
& \textbf{87.5(1.1)} & 76.2(0.5) & 74.3(1.3) & \textbf{77.3(1.3)} & 72.1(1.0) & 65.4(0.8) & 71.4(1.4)&65.0(0.9) & 71.5(1.2) & 65.2(1.1) & 71.2(0.8) & 65.3(1.6) & 72.1(1.3) & 64.9(1.2) \\
& \textbf{91.1(0.6)} & 79.9(0.6) & 85.5(1.1) & \textbf{83.7(0.6)} & 83.6(1.3) & 75.9(0.5) &  83.9(1.3)& 76.1(1.2) & 84.1(1.1) & 75.3(1.1) & 83.7(0.9) & 75.5(1.0) & 83.6(1.4) &  75.9(1.1) \\
& \textbf{92.8(0.3)} & 81.3(0.9) & 92.3(0.7) & \textbf{86.5(0.5)} & 91.7(0.5) & 84.3(0.8) & 92.0(0.5)& 85.2(0.7) & 91.7(0.6) & 83.7(0.7) & 92.3(0.7) & 84.6(0.9) & 91.9(0.4) & 84.3(0.7) \\
& 92.4(0.4) & 82.0(1.0) & \textbf{94.5(0.4)} & 86.9(0.5) & 94.0(0.4) & 87.7(0.6) & 94.2(0.4)& 87.8(0.8) & 94.3(0.5) & 87.6(0.5) & 94.4(0.4) & \textbf{88.0(0.5)} & 94.1(0.6) & 87.3(0.6) \\
\hline
\multirow{7}{*}{truck} & \textbf{55.8(5.6)} & \textbf{58.9(4.3)} & 51.8(1.9) & 58.7(2.7) & 51.6(1.7) & 51.4(1.7) & 51.3(1.8)& 51.7(1.9) & 52.4(1.6) & 51.3(1.2) & 51.8(1.9) & 51.5(1.6) & 51.5(2.0) &  51.3(1.5) \\
& \textbf{80.6(4.4)} & \textbf{77.9(1.4)} & 59.1(1.6) & 71.6(1.1) & 58.2(2.2) & 55.5(1.7) & 58.0(1.6)& 55.8(1.6) & 59.4(2.1) & 55.5(1.0) & 58.4(1.7) & 55.3(1.1) & 58.3(1.3) &  56.3(1.5)\\
& \textbf{91.5(0.8)} & \textbf{83.5(1.2)} & 66.4(2.2) & 80.5(1.3) & 66.1(1.8) & 61.6(1.6) & 65.8(1.2)& 61.3(1.4) & 66.0(1.5) & 61.7(1.5) & 65.8(1.7) & 62.2(1.0) & 65.1(1.4) & 61.8(1.6) \\
& \textbf{93.0(0.5)} & \textbf{84.7(1.5)} & 73.4(1.5) & 84.6(1.2) & 73.4(1.5) & 67.3(1.2) & 72.9(1.2)& 67.0(1.2) & 73.1(1.8) & 67.0(1.4) & 73.5(1.5) & 67.3(1.1) & 72.3(1.1) & 67.5(1.4) \\
& \textbf{94.0(0.3)} & 87.0(0.8) & 87.5(0.8) & \textbf{88.4(0.6)} & 86.4(1.1) & 80.5(1.4) & 87.1(1.0)& 79.4(1.0) & 87.0(0.6) & 78.2(0.8) & 88.0(1.0) & 80.1(1.1) & 86.5(1.0) &  79.7(1.0) \\
& 94.3(0.5) & 87.4(0.6) & 94.2(0.7) & \textbf{88.9(2.8)} & 93.6(0.7) & 88.0(0.8) & 93.9(0.7)& 87.9(0.8) & 93.7(0.8) & 86.9(0.9) & \textbf{94.6(0.3)} & 88.3(0.6) & 93.7(0.6) &  88.0(0.8) \\
& 94.7(0.5) & 87.7(0.4) & 96.1(0.3) & 90.5(0.4) & 96.0(0.3) & 90.1(0.7) & 95.9(0.3)& 90.2(0.8) & 96.1(0.3) & 89.8(0.7) &\textbf{96.3(0.3)} & \textbf{90.8(0.4)} & 95.6(0.3) &  89.9(0.5) \\
\hline
\end{tabular}}
\end{table*}

\begin{table*}[htbp]
\label{table12}
\caption{mean and standard deviation of AUC/BER, where $n=1000$, $\epsilon=0.1,0.5,1,1.5,3,5,7$}
\centering
\setlength{\tabcolsep}{0.75mm}{
\begin{tabular}{|c|c|c|c|c|c|c|c|c|c|c|c|c|c|c|}
\hline
\multirow{2}{*}{Dataset} & \multicolumn{2}{c|}{Barrier} & \multicolumn{2}{c|}{Sigmoid} & \multicolumn{2}{c|}{Unhinged} & \multicolumn{2}{c|}{Savage} & \multicolumn{2}{c|}{Logistic} & \multicolumn{2}{c|}{Squared} & \multicolumn{2}{c|}{Hinge}  \\
&AUC& BER& AUC & BER & AUC & BER & AUC &BER & AUC &BER & AUC & BER & AUC & BER\\
\hline
\multirow{7}{*}{automobile} & \textbf{52.4(13.1)} & 48.6(10.0) & 49.4(5.2) & 50.7(6.2)& 50.5(3.7) & 51.7(3.9) & 52.0(6.2)& 50.3(4.2) & 48.8(6.6) & 52.7(3.2) & 50.8(4.2) & \textbf{52.8(4.9)} & 50.6(5.5) &  52.1(5.0) \\
& \textbf{65.0(5.9)} & \textbf{67.5(4.6)} & 55.5(6.1)& 61.1(5.8) & 56.8(3.8) & 55.0(4.0) & 55.4(6.0)& 56.2(3.1) & 55.2(7.1) & 54.1(3.8) & 57.6(5.7) & 55.8(4.4) & 56.2(4.2) &  54.0(2.6) \\
& \textbf{79.9(4.1)} & \textbf{75.8(2.5)} & 68.9(5.3) & 72.1(3.2) & 63.9(5.4) & 61.7(3.4) & 65.6(5.4)& 60.0(4.6) & 65.6(5.4) & 61.1(3.4) & 64.5(4.1) &59.5(5.6) & 63.5(5.7) &  61.8(3.5) \\
& \textbf{85.1(2.1)} & 75.2(1.4) & 72.4(2.7) & \textbf{75.4(2.2)} & 70.7(4.6) & 68.2(3.6) & 69.8(2.7)& 67.5(2.2) & 72.0(5.3) & 66.9(3.1) & 68.0(4.9) & 66.4(5.4) & 70.3(4.6) & 67.1(3.6) \\
& \textbf{88.0(1.2)} & 75.0(1.4) & 86.1(2.0) & \textbf{82.0(1.8)} & 85.4(3.1) & 78.2(3.0) & 84.0(3.7)& 76.5(2.7) & 84.5(2.3) & 76.6(2.2) & 83.2(3.9) & 76.1(2.6) & 84.5(2.7) &  77.8(2.6)\\
& 88.8(1.6) & 75.9(0.6) & 91.4(1.6) & 84.5(1.0) & 91.6(0.6) & 84.3(2.0) & 91.2(1.7)& 83.7(1.3) & 91.3(1.6) & 84.0(2.7) & 90.5(1.5) & 84.0(2.6) & \textbf{91.9(1.7)} &  \textbf{84.9(2.1)} \\
& 90.2(1.0) & 75.6(0.9) & 93.2(0.7) & 85.2(1.1) & 94.2(1.3) & 86.4(1.7) & \textbf{94.5(1.2)}& \textbf{86.7(1.4)} & 94.3(1.0) & 86.1(2.0) & 94.2(1.2) & 86.4(1.2) & 94.1(1.6) & 85.6(1.3) \\
\hline
\multirow{7}{*}{bird} & 53.7(7.9) & 49.5(9.5) & 53.5(4.1) & \textbf{49.7(6.6)} & 52.2(6.7) & 49.4(5.1) & \textbf{53.8(6.0)}& 49.5(3.4) & 53.2(5.4) & 48.7(5.9) & 52.9(5.8) & 49.4(5.1) & 51.4(4.2) &  48.6(5.1) \\
& \textbf{64.9(5.4)} & \textbf{60.5(6.4)} & 56.6(5.3) & 60.0(3.2) & 58.0(4.1) & 55.7(3.8) & 58.0(5.4)& 56.7(4.0) & 54.7(5.5) & 56.1(4.3) & 56.9(4.8) & 54.8(3.0) & 56.4(5.9) & 56.4(3.1) \\
& \textbf{77.2(2.8)} & \textbf{69.8(4.1)} & 65.3(4.8) & 68.7(3.4) & 63.9(5.1) & 59.6(3.5) &  63.4(4.3)& 60.4(2.0) & 64.6(5.0) & 61.0(3.5) & 62.4(5.3) & 58.7(3.4) & 63.3(4.5)&  60.8(3.5) \\
& \textbf{78.7(2.4)} & 71.0(1.6) & 71.0(5.6) & \textbf{72.8(3.3)} & 68.5(4.7) & 64.3(3.3) &  67.3(5.4)& 63.5(2.6) & 69.0(3.7) & 64.6(2.6) & 70.4(3.2) & 64.6(3.5) & 68.6(5.5) & 64.7(3.7) \\
& 81.3(1.1) & 70.5(1.7) & \textbf{83.5(1.6)} & 76.2(1.6) & 79.6(3.8) & \textbf{76.3(4.1)} & 79.6(3.9)& 76.0(3.2) & 81.4(2.7) & 74.0(3.2) & 79.7(3.1) & 73.9(3.9) & 79.9(3.7) &  74.2(3.4) \\
& 81.0(0.7) & 73.6(1.7) & 87.6(1.0) & 77.3(1.3) & 87.5(2.0) & 79.9(1.5) & 87.4(1.7)& 78.5(1.9) & 87.7(1.7) & 79.9(2.0) & 87.7(1.9) & \textbf{80.0(1.2)} & \textbf{88.2(1.8)} &  79.5(1.8) \\
& 81.7(0.9) & 73.3(1.1) & \textbf{89.7(0.6)} & 78.6(1.7) & 88.7(2.1) & 81.3(1.4) & 89.4(1.5)& 80.5(2.5) & 89.4(1.1) & 81.3(1.8) & 89.0(1.1) & \textbf{81.6(1.1)} & 88.7(1.6) & 81.3(1.5) \\
\hline
\multirow{7}{*}{cat} & 52.5(7.7) & \textbf{51.5(13.2)} & 52.6(7.7) & 48.7(6.6) & 53.5(5.6) & 49.2(5.8) & \textbf{53.6(8.1)}& 51.1(3.8) & 51.0(6.7) & 50.5(1.9) & 52.5(6.0) & 50.7(4.3) & 52.4(5.9) & 51.0(4.0) \\
& \textbf{76.0(7.1)} & \textbf{73.6(4.5)} & 61.1(3.8) & 65.3(8.2) & 58.7(5.1) & 56.9(4.0) & 60.0(4.8)& 56.5(3.4) & 61.4(5.6) & 55.3(6.5) & 60.4(3.8) & 55.3(2.7) & 59.4(2.6) & 58.1(4.0) \\
& \textbf{84.4(2.4)} & \textbf{77.9(2.5)} & 67.7(3.2) & 76.1(3.3) & 65.8(3.4) & 60.7(4.3) & 65.3(4.9)& 61.6(4.6) & 66.2(3.8) & 61.7(2.3) & 63.5(3.9) & 60.5(4.5) & 67.6(1.7) &  60.9(6.0) \\
& \textbf{87.0(2.0)} & \textbf{80.2(1.8)} & 74.5(4.0) & 77.9(3.8) & 71.0(4.1) & 66.2(3.5) & 72.7(3.6)& 66.5(3.1) & 73.4(3.7) & 66.3(2.2) & 70.9(4.2) & 65.9(3.6)& 71.3(4.1) &  67.5(2.2) \\
& \textbf{87.8(1.4)} & 81.4(1.7) & 85.9(2.0) & \textbf{83.1(1.6)} & 84.1(3.0) & 80.0(2.2) & 84.7(2.8)& 76.5(2.5) & 85.3(2.3) & 77.0(3.0) & 84.4(3.3) & 76.7(1.8) & 84.6(3.1) &  78.8(2.9) \\
& 88.8(0.8) & 81.3(1.1) & 89.3(1.5) & \textbf{83.9(1.4)} & 89.8(1.7) & 82.9(2.5) & 89.7(1.5)& 83.4(2.4) & 90.1(1.5) & 82.3(2.6) & \textbf{90.4(1.6)} & \textbf{83.9(1.3)} & 89.8(2.2) &  83.6(1.3) \\
& 88.3(1.2) & 82.8(0.8) & 90.9(1.3) & 84.0(1.5) & 90.9(1.0) & 85.2(1.1) & 90.9(1.2)& 85.1(1.6) & 91.2(1.4)& 84.7(1.2) & \textbf{91.7(1.3)} & \textbf{85.8(0.9)} & 91.0(1.2) &  84.3(0.8) \\
\hline
\multirow{7}{*}{deer} & \textbf{63.7(9.2)} & \textbf{58.0(10.0)} & 53.1(4.3) & 56.0(6.1)& 52.2(5.1) & 51.1(5.4) & 52.0(5.1)& 52.7(3.8) & 52.4(6.0) & 51.4(4.8) & 50.6(2.6) & 52.3(4.5) & 52.1(5.7) & 51.9(5.5) \\
& \textbf{75.2(8.2)} & \textbf{69.3(8.9)} & 60.2(4.8) & 64.3(6.6) & 60.6(4.0) & 51.7(4.0) & 60.3(4.0)& 53.1(4.0) & 60.8(3.0) & 53.3(4.5) & 61.3(3.4) & 53.8(3.9) & 60.6(5.6) &  54.2(4.5) \\
& \textbf{83.5(2.1)} & \textbf{76.3(3.1)} & 63.5(4.9) & 70.2(5.1) & 60.7(6.6) & 57.3(1.8) & 60.3(4.7)& 57.8(4.3) & 61.2(5.8) & 58.1(3.8) & 60.5(1.8) & 58.8(4.2) & 62.6(4.3) & 58.3(3.9) \\
& \textbf{85.9(1.4)} & \textbf{79.1(1.7)} & 70.1(4.0) & 78.2(1.6) & 68.6(3.9) & 65.2(2.8) & 67.7(4.4)& 62.7(2.7) & 69.3(5.3) & 63.6(3.0) & 70.2(3.2) & 62.6(5.4) & 67.9(4.5) & 63.9(3.3) \\
& \textbf{85.6(1.1)} & \textbf{80.3(1.0)} & 81.9(3.1) & 79.1(1.5) & 80.1(3.1) & 76.2(2.8) & 81.9(3.7)& 73.7(2.8) & 81.2(2.7) & 74.5(3.6) & 82.1(3.1) & 74.1(4.3) & 80.0(2.6) &  75.2(3.3) \\
& 85.8(0.8) & 80.1(1.1) & \textbf{87.5(1.0)} & 82.1(1.1) & 87.2(2.5) & 82.2(1.5) & 86.2(1.9)& 81.1(2.4) & 86.6(1.8) & 80.9(2.3) & 87.1(1.3) & \textbf{82.8(1.8)} & 86.6(2.8) & \textbf{82.8(1.9)} \\
& 86.6(1.7) & 80.9(1.0) & 89.6(1.0) & 82.4(1.6) & 89.8(1.6) & 82.9(1.4) & 90.2(1.6)& 82.7(1.6) & \textbf{91.2(1.4)} & 81.7(1.8) & 89.8(1.6) & \textbf{83.4(1.6)} & 89.9(1.9) &  \textbf{83.4(2.0)}\\
\hline
\multirow{7}{*}{dog} & \textbf{57.5(12.2)} & \textbf{56.5(7.3)} & 50.4(6.0) & 53.0(4.9) & 50.4(5.4) & 49.0(4.9) & 49.9(3.9)& 49.7(4.1) & 52.9(5.7) & 50.5(3.4) & 51.6(3.8) & 51.0(4.6) & 52.9(6.5)&  50.1(5.0) \\
& \textbf{71.8(7.0)} & \textbf{72.5(3.0)} & 57.0(4.3) & 64.7(4.0) & 58.7(6.2) & 56.7(4.9) & 59.1(5.4)& 55.2(3.1) & 56.6(5.5) & 58.4(4.2) & 55.2(6.4) & 55.5(3.3) & 56.7(5.4) & 58.4(4.2) \\
& \textbf{83.8(3.8)} & \textbf{77.3(2.6)} & 68.0(5.0) & 71.4(4.6) & 64.8(4.5) & 62.0(2.9) & 66.6(4.6)& 60.3(2.5) & 65.3(6.0) & 60.8(6.1) & 65.6(5.4)& 59.6(3.4) & 63.9(7.2) & 59.9(3.6) \\
& \textbf{87.2(1.6)} & 80.6(2.0) & 74.6(3.6) & \textbf{81.2(2.0)} & 70.9(4.6) & 68.8(2.8) & 72.3(5.0)& 68.0(2.2) & 72.7(5.4) & 66.9(2.1) & 72.7(4.5) & 67.4(3.5) & 72.7(4.9) & 69.7(1.8) \\
& \textbf{88.9(1.4)} & 81.6(1.8) & 86.0(2.9) & \textbf{83.7(1.6)} & 83.0(2.2) & 80.6(3.9) & 84.0(2.7)& 79.2(3.0) & 85.3(2.7) & 79.2(2.6) & 82.8(2.2) & 78.9(3.1) & 84.3(3.1) &  80.1(2.9) \\
& 90.1(1.0) & 82.9(1.2) & \textbf{92.8(1.5)} & 86.6(1.2) & 92.1(1.6) & \textbf{87.6(1.3)} & 92.7(2.0)& 86.1(2.2) & 92.3(1.4) &85.8(2.8) & 92.5(1.1) & 86.8(2.0) & \textbf{92.8(1.2)} & 87.3(2.0) \\
& 89.0(1.0) & 83.0(0.9) & 94.5(1.0) & 86.8(1.3) & 93.8(1.4) & 87.5(1.9) & 94.2(1.0)& \textbf{88.6(1.8)} & 94.3(1.2) & 87.0(2.4) & \textbf{95.0(0.9)} & 88.2(1.7) & 94.3(1.2) &  88.1(1.7)\\
\hline
\multirow{7}{*}{frog} & 49.0(18.7) & \textbf{58.3(14.3)} & \textbf{50.7(7.6)} & 56.5(5.6) & 49.6(8.0) & 54.7(4.0) & 48.6(5.8)& 52.5(4.3) & 47.6(6.3) & 52.4(4.5) & 47.7(7.3) & 51.7(2.7) & 48.9(7.0) & 54.7(4.2) \\
& \textbf{79.9(10.2)} & \textbf{80.7(5.5)} & 61.0(9.2) & 76.2(7.4) & 57.7(6.5) & 59.8(3.9) & 58.0(5.7)& 57.0(6.0) & 59.2(5.7) & 57.3(5.8) & 58.0(6.4) & 57.8(4.2) & 57.6(6.8) &  58.6(4.7) \\
& \textbf{91.1(1.3)} & \textbf{85.3(1.3)} & 71.8(4.1) & 83.2(2.1) & 69.1(4.7) & 66.5(3.5) & 67.6(4.7)& 66.0(5.3) & 69.9(4.5) & 67.5(4.0) & 67.3(4.2) & 66.3(4.2) & 68.8(5.3) & 66.3(2.7) \\
& \textbf{92.6(0.6)} & \textbf{86.4(1.6)} & 77.0(6.5) & 85.7(1.5) & 72.7(5.8) & 70.4(2.4) & 74.2(3.5)& 67.7(3.6) & 75.1(4.9) & 70.6(3.6) & 72.3(5.3) & 69.5(2.0) & 73.2(6.2) &  70.6(5.3) \\
& \textbf{93.1(1.0)} & 87.0(1.3) & 91.0(2.2)& \textbf{88.9(1.0)} & 86.6(2.8) & 84.9(1.9) &  87.1(2.6)& 81.2(3.7) & 89.4(3.0) & 82.5(2.9) & 88.7(2.2) & 82.2(2.8) & 87.4(2.3) & 85.5(2.3) \\
& 93.8(0.6) & 87.8(1.1) & 94.3(0.7) & \textbf{90.2(1.1)} & 92.7(1.6) & 89.2(2.0) & 94.0(0.8)& 88.4(1.6) & \textbf{94.6(0.7)} & 88.6(1.4) & 94.3(1.4) & 88.7(1.6) & 94.1(1.5) &  90.0(1.4) \\
& 93.6(0.5) & 88.2(0.7) & 95.1(0.9) & 89.2(1.0) & 95.2(1.3) & 90.5(1.0) & 95.4(1.4)& \textbf{90.6(1.4)} & 95.8(0.6) & 89.6(1.3) & \textbf{95.9(0.9)} & 90.4(1.5) & 95.1(0.8) &  89.8(1.0) \\
\hline
\multirow{7}{*}{horse} & \textbf{55.6(11.1)} & \textbf{53.1(10.1)} & 52.6(2.9) & 51.6(6.8) & 52.2(5.3) & 53.1(5.2) &  51.7(3.3)& 51.9(2.8) & 50.1(5.3)& 50.3(4.4) & 53.7(3.2) & 50.1(3.7) & 51.7(4.6) &  52.6(4.9) \\
& \textbf{73.4(6.5)} & \textbf{70.4(8.2)} & 60.4(3.5) & 64.8(4.0) & 60.3(4.0) & 56.0(3.5) & 59.4(2.9)& 57.2(3.6) & 60.2(3.2)& 58.5(4.4) & 58.7(2.7) & 56.8(3.3) & 59.8(2.5) & 58.0(3.5) \\
& \textbf{84.0(3.5)} & \textbf{77.7(4.2)} & 68.2(7.5) & 75.4(4.6) & 62.5(6.0) & 63.2(4.7) & 63.4(5.0)& 63.6(3.8) & 64.9(5.0) & 63.8(4.9) & 64.8(5.2) & 62.3(3.8) & 64.8(5.8) &  64.9(4.0) \\
& \textbf{87.3(2.2)} & \textbf{80.2(1.6)} & 75.2(4.7)& 78.7(2.7) & 71.8(4.8) & 67.3(2.2) & 70.8(4.9)&65.4(1.4)& 73.0(5.5) & 64.8(3.0) & 70.3(3.4) & 65.4(2.6) & 70.75(5.8) & 67.8(3.5) \\
& \textbf{89.3(1.0)} & 81.4(1.6) & 87.4(2.1) & \textbf{83.2(1.9)}  & 83.7(2.9) & 81.1(2.7) & 84.7(3.3)& 79.4(2.4) & 84.7(3.3) & 79.7(1.8) & 84.7(3.1) & 78.4(3.0) & 84.9(1.7) & 80.3(2.6)\\
& 90.3(0.8) &82.1(1.1) & \textbf{93.5(1.2)} & 85.4(1.5) & 91.9(1.2) & 86.5(2.1) & 92.5(1.3)& 86.0(1.9) & 92.5(1.0) & 85.4(1.9) & 92.9(1.7) & 85.5(1.8) & 92.8(1.3) &  \textbf{86.8(1.4)}\\
& 90.9(0.7) &83.0(1.9) & \textbf{94.7(1.3)} &84.8(1.2) & 94.4(1.4) & \textbf{88.0(1.8)} & 94.2(0.9)& 87.7(1.5) & \textbf{94.7(1.1)} & 87.7(1.3) & 94.3(1.2) & 87.9(0.9) & 94.0(1.2) & \textbf{88.0(2.1)} \\
\hline
\multirow{7}{*}{ship} & 49.2(7.7) & 49.3(6.0) & 49.1(5.5) & 49.1(5.7) & \textbf{49.9(4.6)} & \textbf{51.7(3.5)} & 47.5(6.1)& 51.4(3.1) & 49.1(5.1) & \textbf{51.7(3.6)} & 48.7(3.3) & 49.1(3.0)& 48.0(4.6) & 50.6(3.0)\\
& \textbf{60.1(5.2)} & \textbf{57.5(6.2)} & 57.5(5.8) & 55.9(5.0) & 58.4(5.6) & 53.5(3.2) & 55.8(5.4)& 54.6(4.5) & 56.9(4.7)& 54.0(2.7) & 57.0(4.1) & 53.9(3.8) & 56.8(3.9) & 53.6(3.5) \\
& \textbf{66.7(4.7)} & 62.0(2.7) & 59.7(4.6) &\textbf{ 62.3(4.7)} & 56.8(5.5) & 57.3(4.8) & 59.6(4.5)& 55.6(4.4) & 57.7(7.2) & 56.5(2.6) & 58.1(5.1) & 56.8(4.3) & 59.2(3.9) & 56.4(2.5) \\
& \textbf{67.5(4.4)} & 65.1(2.6) & 62.9(4.8) & \textbf{65.5(3.3)} & 61.2(4.0) & 63.6(4.6) & 61.7(3.9)& 60.6(3.8) & 61.5(4.6) & 61.7(3.2) & 63.1(4.0) & 63.1(4.3) & 61.5(4.9) & 61.9(4.4) \\
& 70.8(2.0) & 66.0(1.7) & \textbf{74.7(2.4)}& 70.2(2.4) & 73.1(3.4) & 68.1(2.7) & 72.9(2.1)& 68.1(2.4) & 73.2(2.9) & \textbf{70.8(3.3)} & 72.6(4.2) &68.1(2.7) & 72.9(2.8) & 67.8(3.9) \\
& 72.7(1.7) & 67.9(1.0) & 80.5(3.3) & 72.7(1.1) & 79.7(2.2) & 73.8(2.4) & 79.4(1.9)& 73.8(2.2) & 79.7(2.5) & 73.5(1.8) & \textbf{81.0(2.7)} & 74.4(2.3) & 78.3(2.6) & \textbf{75.5(2.4)} \\
& 73.9(1.5) & 67.3(1.3) & \textbf{83.2(1.9)} & 73.9(1.5) & 82.6(1.8)& 75.5(3.0) & 82.8(1.6)& 75.5(1.6) & 82.8(1.3) & 75.8(2.2) & 82.4(1.5) & 75.3(2.1) & 82.8(1.5) & \textbf{76.8(2.3)} \\
\hline
\multirow{7}{*}{truck} & \textbf{53.1(7.8)} & 49.3(8.3) & 51.7(4.6) & 49.3(3.9) & 51.8(5.2) & 50.9(4.2) & 51.1(3.7)& 50.0(2.4) & 52.6(5.5) & \textbf{51.4(4.1)} & 51.4(4.0) & 51.3(5.7) & 50.7(4.2) & 50.7(4.7) \\
& \textbf{63.6(9.2)} & \textbf{66.5(6.8)} & 56.3(7.0) & 60.2(6.4) & 56.4(5.1)& 54.9(4.2) & 54.8(5.5)& 57.1(4.9) & 55.4(7.0) & 57.7(3.0) & 57.2(7.1) & 54.4(3.2) & 55.0(5.7) & 56.8(3.5) \\
& \textbf{79.9(3.1)} & \textbf{72.6(2.3)} & 67.1(4.9) & 70.1(4.5) & 63.9(4.0) & 61.1(4.2) & 63.3(4.5)& 60.2(3.4) & 65.4(3.3) & 61.1(3.4) & 64.3(3.1) & 59.7(2.8) & 64.2(6.1) &  61.2(5.9)\\
& \textbf{83.6(1.8)} & \textbf{73.0(1.3)} & 75.5(3.0) & 72.7(2.7) & 71.8(4.2) & 64.9(5.0) & 72.7(3.6)& 66.0(4.7) & 73.4(4.0) & 66.0(5.5) & 70.7(2.2) & 65.4(4.0) & 71.5(3.9) & 66.5(4.4) \\
& \textbf{85.7(1.3)} & 73.7(0.6) & 84.4(3.4) & \textbf{79.8(2.6)} & 82.0(3.7) & 76.1(2.5) & 82.0(2.1)& 76.4(3.0) & 82.0(4.2) & 76.2(2.4) & 81.4(1.8) & 75.4(3.1) & 81.6(3.2) & 77.1(3.2) \\
& 86.9(1.8) & 75.2(1.5) & \textbf{89.5(1.1)} & \textbf{82.2(1.4)} & 89.0(1.9) & 80.5(1.2) & 88.9(1.4)& 81.2(1.9) & 89.1(2.0) & 79.9(1.8) & 88.2(1.4) & 80.2(2.0) & 88.1(1.0) &  81.4(2.4) \\
& 86.2(1.5) & 76.0(1.3) & 90.2(1.7) & 82.9(1.3) & 90.9(0.9) & \textbf{83.9(1.4)} &  90.5(1.5)& 83.6(0.7) & 90.7(0.8) & 82.5(1.5) & \textbf{91.2(0.8)} & 82.6(1.8)& \textbf{91.2(1.0)} &  83.7(0.9) \\
\hline
\end{tabular}}
\end{table*}

\end{appendix}

\end{document}